\DeclareMathAlphabet{\can}{OT1}{cmss}{m}{n}
\newtheorem{thm}{Theorem}[section]
\newtheorem{cor}[thm]{Corollary}
\newtheorem{lem}[thm]{Lemma}
\newtheorem{rem}[thm]{Remark}
\newtheorem{exa}[thm]{Example}
\theoremstyle{definition}
\newtheorem{defn}[thm]{Definition}
\theoremstyle{fact}
\theoremstyle{conjecture}
\numberwithin{equation}{section}
\newtheorem{definition}{Definition}
\begin{document}

\title[ ]{  MDS or NMDS self-dual codes from twisted generalized Reed-Solomon codes}

\author[Huang] {Daitao Huang}
\address{\rm
College of Computer Science and Technology, Nanjing University of Aeronautics and Astronautics,
Nanjing,  211100, P. R. China}
\email{dthuang666@163.com}

\author[Yue]{Qin Yue}
\address{\rm Department of Mathematics, Nanjing University of Aeronautics and Astronautics,
Nanjing,  211100, P. R. China}
\email{yueqin@nuaa.edu.cn}

\author[Niu] {Yongfeng Niu}
\address{\rm
College of Computer Science and Technology, Nanjing University of Aeronautics and Astronautics,
Nanjing,  211100, P. R. China}
\email{niuajm@163.com}

\author[Li]{Xia Li}
\address{\rm Department of Mathematics, Nanjing University of Aeronautics and Astronautics,
Nanjing,  211100, P. R. China}
\email{lixia4675601@163.com}

\thanks{This work was supported in part by National Natural Science Foundation of China (No. 61772015).  }

 \keywords{ self-dual codes, twisted generalized  Reed-Solomon codes, MDS codes
}

\subjclass[2010]{94B05}


\begin{abstract}
Self-dual maximum distance separable codes  (self-dual MDS codes) and self-dual near MDS codes are very important in coding theory and practice.  Thus, it is interesting to construct self-dual MDS or self-dual near MDS codes. In this paper, we not only give check matrices of dual codes of  twisted generalized  Reed-Solomon codes (TGRS codes) but also present the efficient and necessary condition of self-dual TGRS codes.
Moreover, we construct several classes of self-dual MDS or self-dual near MDS codes from TGRS codes.

\end{abstract}
\maketitle

\section{Introduction}
 In recent years, study of self-dual maximum distance separable( MDS for short)  codes have attracted a lot of attention \cite{5}-\cite{14}.
First of all, MDS codes achieve optimal parameters that allow correction of maximal number of errors for a given code rate. Study of various properties of MDS codes, such as classification \cite{15,20} of MDS codes, non-Reed-Solomon MDS codes \cite{21} and existence of MDS codes \cite{77}, has been the center of the area. In addition, MDS codes are closely connected to combinatorial design and finite geometry \cite{18,8}. Furthermore, the generalized Reed-Solomon codes are a class of MDS codes and have found wide applications in practice. On the other hand, self-dual codes, due to their nice structures,  have been attracting attention from both coding theorists, cryptographers and mathematicians. Or rather, self-dual codes have various applications in cryptography \cite{55}-\cite{19} and combinatorics \cite{18,8}.
Thus, it is natural to consider the intersection of these two classes, namely, MDS self-dual codes. For example, some new self-dual MDS codes from generalized Reed-Solomon codes are constructed in \cite{3}. And some other new self-dual MDS codes are also constructed in \cite{27}-\cite{ZZZ}.

Similar to self-dual MDS codes, self-dual near MDS (NMDS for short) codes have nice structures as well. NMDS codes were introduced in 1995 in \cite{555} by weakening the definition of MDS codes. If a code has one singleton defect from being an MDS code, then it is called almost MDS (AMDS). An AMDS code is an NMDS code if the dual code is also an AMDS code.
NMDS codes also have application in secret sharing scheme \cite{24}. In \cite{23}, NMDS codes are constructed by properties of elliptic curves.
MDS self-dual codes over large prime fields that arise from the solutions of systems of diophantine equations are constructed  and  many self-dual MDS (or near-MDS) codes of lengths up to 16 over various prime fields are constructed in \cite{26}.
From both theoretical and practical points of view, it is natural to study self-dual NMDS codes. Although there has been lots of work on NMDS codes in literature, little is known for self-dual NMDS codes. It seems challenging to construct self-dual NMDS codes.

Different from Reed-Solomon codes, twisted Reed-Solomon codes are firstly introduced in \cite{1} and a efficient and necessary condition for twisted Reed-Solomon codes to be MDS was given. In this paper, we mainly constructed self-dual  MDS codes and self-dual  NMDS codes by twisted generalized Reed-Solomon codes. Moreover, a efficient and necessary condition for twisted generalized Reed-Solomon codes to be  NMDS is naturally given in this paper and
we constructed several classes of self-dual  MDS and self-dual NMDS codes from
twisted generalized Reed-Solomon codes.

The paper is organized as follows. In section 2, there are some basic notations and results about twisted generalized Reed-Solomon codes.  Several classes of self-dual MDS codes are constructed in Section 3. In Section 4, we conclude the paper.

\section{Preliminaries}
In this section, we review some basic notations and some basic knowledge.  In particular,  we introduce MDS codes and NMDS codes from twisted generalized Reed-Solomon codes and show their check matrices.

\subsection{TGRS codes}
Let $\mathbb{F}_q[x]$  be the polynomial ring over a field field $\Bbb F_q$ of order $q$.
We denote the rank of a  matric $M$ over $\mathbb{F}_q$ by $R(M)$.
We abbreviate generalized Reed-Solomon codes,  twisted Reed-Solomon codes,  and  twisted generalized  Reed-Solomon codes as RS codes, TRS codes, and   TGRS codes, respectively.

Now, let us recall some definitions of  TRS codes (see \cite{1}).

\begin{definition}
Let $\mathcal V$ be a $k$-dimensional $\mathbb{F}_q$-linear subspace of  $ \mathbb{F}_q[x]$. Let $\alpha_1,\ldots,\alpha_n $ be distinct elements in $\mathbb{F}_q$ and  $\alpha=(\alpha_1,\ldots,\alpha_n)$. Let $v_1,\ldots, v_n$ be nonzero elements in $\Bbb F_q$ and $v=(v_1,\ldots, v_n)$. We call $\alpha_1,\ldots,\alpha_n$ the evaluation points. Define the evaluation map of $\alpha$ on $\mathcal{V}$ by
$$ev_{\alpha}:\mathcal{V}\longrightarrow \mathbb{F}_q^n, f(x)\longmapsto ev_\alpha(f(x))=(f(\alpha_1),\ldots, f(\alpha_n));$$
define the evaluation map of $\alpha$ and $v$ by
$$ev_{\alpha, v}: \mathcal V\longrightarrow\mathbb F_{q}^n, f(x)\longmapsto ev_{\alpha,v}(f(x))=(v_1f(\alpha_1),\ldots, v_nf(\alpha_n)).$$
\end{definition}
\begin{definition}
Let $k$, $t$, and $h$ be positive integers with  $0\leq h < k \leq q$ and  $\eta \in \Bbb F_q^*=\mathbb{F}_q\backslash\{0\}$. Define the set of $(k,t,h,\eta)$-twisted polynomials as
$$\mathcal{V}_{k,t,h,\eta}=\{f(x)=\sum_{i=0}^{k-1}a_ix^i+\eta a_hx^{k-1+t}:a_i\in \mathbb{F}_q, 0\le i\le k-1\},$$
which is a $k$-dimensional $\Bbb F_q$-linear subspace.  We call $h$ the hook and $t$ the twist.
\end{definition}

In this paper, we always assume that $h=k-1$ and  $t=1$, so $$\mathcal{V}_{k,1,k-1,\eta}=\{f(x)=\sum_{i=0}^{k-1}a_ix^i+\eta a_{k-1}x^{k}:a_i\in \mathbb{F}_q, 0\le i\le k-1\}.$$ For convenience, set $k\le n-k$.

\begin{defn}
Let $\alpha_1,\ldots,\alpha_n $ be distinct elements in $\Bbb F_q$ and  $\alpha=(\alpha_1,\ldots,\alpha_n)$.  Let $v_1,\ldots, v_n$ be nonzero elements in $\Bbb F_q$ and $v=(v_1,\ldots, v_n)$. Let $\mathcal V_{k,1,k-1,\eta}$ be in   Definition $2$. The TRS code of length $n$ and dimension $k$ is defined as
$$\mathcal{C}_k(\alpha,1,\eta)=ev_{\alpha}(\mathcal{V}_{k,1,k-1,\eta})\subseteq \Bbb F_q^n.$$
The TGRS code of length $n$ and dimension $k$ is defined as
$$\mathcal C_{k}(\alpha, v, \eta)=ev_{\alpha,v}(\mathcal V_{k,1,k-1,\eta})\subseteq \Bbb F_q^n.$$
\end{defn}
In fact, if $v=(1,\ldots, 1)=1$, then $\mathcal C_{k}(\alpha, v, \eta)=\mathcal{C}_k(\alpha,1,\eta)$, i.e., the TGRS code is the TRS code.

 Let  $G_k$ be a  generator matrix  of  $\mathcal C_{k}(\alpha, v, \eta)$, then

\begin{eqnarray}\label{2.1}
  G_k &=&\left(
  \begin{array}{cccc}
    v_1 & v_2 & \ldots & v_n \\
    v_1\alpha_1 & v_2\alpha_2 & \ldots & v_n\alpha_n \\
     \vdots & \vdots &  & \vdots \\
    v_1(\alpha_1^{k-1}+\eta \alpha_1^k) & v_2(\alpha_2^{k-1}+\eta \alpha_2^k) & \ldots & v_n(\alpha_n^{k-1}+\eta \alpha_n^k) \\
  \end{array}
\right).
\end{eqnarray}

 We shall find the check matrix of   of $\mathcal C_{k}(\alpha, v, \eta)$.

\begin{thm} \label{111} Let  $G_k$ be  a  generator matrix  of  $\mathcal C_{k}(\alpha, v, \eta)$ in (\ref{2.1}).    Write $u_i=\prod_{j=1,j\neq i}^n(\alpha_i-\alpha_j)^{-1}$, $1\le i\le n$ and
$a=\sum_{i=1}^n\alpha_i$.

(1) Suppose that  $a\ne 0$ and $\eta \neq -a^{-1}$.  Then

\begin{eqnarray}\label{2.2}
  H_{n-k} &=& \tiny\left(
  \begin{array}{cccc}
   \frac{ u_1}{v_1}  & \ldots & \frac{u_n}{v_n} \\
    \frac{u_1}{v_1}\alpha_1  & \ldots & \frac{u_n}{v_n}\alpha_n \\
    \vdots  &  & \vdots \\
    \frac{u_1}{v_1}\alpha_1^{n-k-2}  & \ldots & \frac{u_u}{v_n}\alpha_1^{n-k-2} \\
   \frac{ u_1}{v_1}(\alpha_1^{n-k-1}- \frac{\eta }{1+a\eta}\alpha_1^{n-k}) & \ldots &\frac{ u_n}{v_n}(\alpha_n^{n-k-1}-\frac{\eta }{1+a\eta}\alpha_n^{n-k}) \\
  \end{array}
\right)
\end{eqnarray}
is the check matrix of $C_k(\alpha, v, \eta)$,

  (2) Suppose that  $a=0$ and $\eta \neq 0$.  Then

\begin{eqnarray}\label{2.22}
  H_{n-k} &=& \tiny\left(
  \begin{array}{cccc}
   \frac{ u_1}{v_1} & \frac{u_2}{v_2} & \ldots & \frac{u_n}{v_n} \\
    \frac{u_1}{v_1}\alpha_1 & \frac{u_2}{v_2}\alpha_2 & \ldots & \frac{u_n}{v_n}\alpha_n \\
    \vdots & \vdots &  & \vdots \\
    \frac{u_1}{v_1}\alpha_1^{n-k-2} & \frac{u_2}{v_2}\alpha_2^{n-k-2} & \ldots & \frac{u_u}{v_n}\alpha_1^{n-k-2} \\
   \frac{ u_1}{v_1}(\alpha_1^{n-k-1}- \eta \alpha_1^{n-k}) & \frac{u_2}{v_2}(\alpha_2^{n-k-1}-\eta \alpha_2^{n-k}) & \ldots &\frac{ u_n}{v_n}(\alpha_n^{n-k-1}-\eta \alpha_n^{n-k}) \\
  \end{array}
\right)
\end{eqnarray}
is the check matrix of $C_k(\alpha, v, \eta)$,

  (3) Suppose that  $a\neq 0$ and $\eta = -a^{-1}$.  Then

\begin{eqnarray}\label{2.24}
  H_{n-k} &=& \tiny\left(
  \begin{array}{cccc}
   \frac{ u_1}{v_1} & \frac{u_2}{v_2} & \ldots & \frac{u_n}{v_n} \\
    \frac{u_1}{v_1}\alpha_1 & \frac{u_2}{v_2}\alpha_2 & \ldots & \frac{u_n}{v_n}\alpha_n \\
    \vdots & \vdots &  & \vdots \\
    \frac{u_1}{v_1}\alpha_1^{n-k-2} & \frac{u_2}{v_2}\alpha_2^{n-k-2} & \ldots & \frac{u_u}{v_n}\alpha_1^{n-k-2} \\
   \frac{ u_1}{v_1}\alpha_1^{n-k} & \frac{u_2}{v_2}\alpha_2^{n-k} & \ldots &\frac{ u_n}{v_n}\alpha_n^{n-k} \\
  \end{array}
\right)
\end{eqnarray}
is the check matrix of $C_k(\alpha, v, \eta)$,
\end{thm}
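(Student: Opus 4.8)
The plan is to verify, separately in each of the three cases, the two properties that force $H_{n-k}$ to be a check matrix of $\mathcal C_k(\alpha,v,\eta)$: that $G_k\,H_{n-k}^{\mathrm T}=0$, and that $H_{n-k}$ has rank $n-k$. Since $\mathcal C_k(\alpha,v,\eta)=ev_{\alpha,v}(\mathcal V_{k,1,k-1,\eta})$ and every nonzero $f\in\mathcal V_{k,1,k-1,\eta}$ has degree at most $k\le n-k<n$, the evaluation map $ev_{\alpha,v}$ is injective on the $k$-dimensional space $\mathcal V_{k,1,k-1,\eta}$ (a nonzero polynomial of degree less than $n$ cannot vanish at the $n$ distinct points $\alpha_1,\dots,\alpha_n$). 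Hence $\dim_{\mathbb F_q}\mathcal C_k(\alpha,v,\eta)=k$ and $\dim_{\mathbb F_q}\mathcal C_k(\alpha,v,\eta)^{\perp}=n-k$, so once the two properties are established the row space of $H_{n-k}$ is an $(n-k)$-dimensional subspace of the $(n-k)$-dimensional space $\mathcal C_k(\alpha,v,\eta)^{\perp}$ and therefore equals it.

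The single computational tool is the power-sum identity attached to $u_i=\prod_{j\neq i}(\alpha_i-\alpha_j)^{-1}$. For $0\le m\le n-1$ the Lagrange interpolation formula gives $x^{m}=\sum_{i=1}^{n}\alpha_i^{m}u_i\prod_{j\neq i}(x-\alpha_j)$, and comparing the coefficient of $x^{n-1}$ on the two sides yields $\sum_{i=1}^{n}u_i\alpha_i^{m}=0$ for $0\le m\le n-2$ and $\sum_{i=1}^{n}u_i\alpha_i^{n-1}=1$; substituting the relation $\alpha_i^{n}=a\,\alpha_i^{n-1}-e_2\alpha_i^{n-2}+\cdots$ that follows from $\prod_{j}(\alpha_i-\alpha_j)=0$ (with $e_1=a$ and $e_2,\dots,e_n$ the elementary symmetric functions of the $\alpha_j$) then gives $\sum_{i=1}^{n}u_i\alpha_i^{n}=a$. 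With these three identities, each entry of $G_k\,H_{n-k}^{\mathrm T}$ collapses to a short combination of power sums: pairing a non-twisted row $(v_i\alpha_i^{j})_i$ of $G_k$, $0\le j\le k-2$, with a non-twisted row $(\tfrac{u_i}{v_i}\alpha_i^{\ell})_i$ of $H_{n-k}$, $0\le\ell\le n-k-2$, gives $\sum_i u_i\alpha_i^{j+\ell}$ with $j+\ell\le n-4\le n-2$, hence $0$; the twisted row of $G_k$ against any non-twisted row of $H_{n-k}$, and any non-twisted row of $G_k$ against the last row of $H_{n-k}$, likewise reduce to power sums of exponent at most $n-2$ and vanish.

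I expect the only genuine obstacle to be the pairing of the twisted row of $G_k$ with the last row of $H_{n-k}$, which is precisely where the trichotomy is forced. In Case (1) this inner product equals
$$\sum_{i=1}^{n}u_i\bigl(\alpha_i^{k-1}+\eta\alpha_i^{k}\bigr)\Bigl(\alpha_i^{n-k-1}-\frac{\eta}{1+a\eta}\,\alpha_i^{n-k}\Bigr)=\sum_{i=1}^{n}u_i\Bigl(\alpha_i^{n-2}+\frac{a\eta^{2}}{1+a\eta}\,\alpha_i^{n-1}-\frac{\eta^{2}}{1+a\eta}\,\alpha_i^{n}\Bigr),$$
which by the three identities is $0+\frac{a\eta^{2}}{1+a\eta}-\frac{a\eta^{2}}{1+a\eta}=0$; the same expansion with $a=0$ and multiplier $\eta$ disposes of Case (2), while in Case (3) the last row of $H_{n-k}$ is $(\tfrac{u_i}{v_i}\alpha_i^{n-k})_i$, so the pairing simply collapses to $\sum_{i=1}^{n}u_i(\alpha_i^{n-1}+\eta\alpha_i^{n})=1+a\eta=0$ because $\eta=-a^{-1}$. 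Thus $\tfrac{\eta}{1+a\eta}$ is exactly the constant that annihilates the twisted--twisted pairing, and the hypotheses $1+a\eta\neq0$, $a=0$, $1+a\eta=0$ are precisely the conditions under which the respective formulas make sense; this explains, rather than merely accompanies, the case split.

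Finally, multiplying the $i$-th column of $H_{n-k}$ by the nonzero scalar $v_i/u_i$ leaves the rank unchanged and produces a matrix whose first $n-k-1$ rows are $(\alpha_i^{\ell})_i$, $0\le\ell\le n-k-2$, and whose last row is $(\alpha_i^{n-k-1}-c\,\alpha_i^{n-k})_i$ with $c=\tfrac{\eta}{1+a\eta}$ in Cases (1)--(2), or $(\alpha_i^{n-k})_i$ in Case (3). A nontrivial linear dependence among these $n-k$ rows would exhibit a nonzero polynomial of degree at most $n-k<n$ vanishing at the $n$ distinct points $\alpha_1,\dots,\alpha_n$, which is impossible; and since $1,x,\dots,x^{n-k-2}$ together with $x^{n-k-1}-cx^{n-k}$ (resp.\ $x^{n-k}$) are linearly independent over $\mathbb F_q$, no such dependence can exist, so $\rank(H_{n-k})=n-k$. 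Combining this with $G_k\,H_{n-k}^{\mathrm T}=0$ and the dimension count of the first paragraph proves, in all three cases, that $H_{n-k}$ is a check matrix of $\mathcal C_k(\alpha,v,\eta)$. The degenerate ranges $k=1$ or $n-k=1$ only render some of the pairings vacuous and are covered by the same argument.
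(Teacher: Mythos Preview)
Your proof is correct. Both you and the paper rest on the same three identities $\sum_i u_i\alpha_i^{m}=0$ for $0\le m\le n-2$, $\sum_i u_i\alpha_i^{n-1}=1$, $\sum_i u_i\alpha_i^{n}=a$, but the organization is genuinely different. The paper proceeds structurally: it forms the full Vandermonde matrix $G$ and the ``reversed'' matrix $H$ with columns $(u_i\alpha_i^{n-1},\dots,u_i)^{\mathrm T}$, observes that $GH$ is unit lower-triangular with $(k+1,k)$-entry equal to $a$, and then applies an elementary row operation $P_1$ (to create the twisted row of $G_k$) together with an elementary column operation $P_2$ (to restore lower-triangularity) so that the last $n-k$ columns of $HP_2$ yield the check matrix; the three cases correspond to the three possible $P_2$. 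You instead verify $G_kH_{n-k}^{\mathrm T}=0$ entry by entry and check $\mathrm{rank}(H_{n-k})=n-k$ directly. Your route is shorter and more transparent for this particular statement, since the candidate $H_{n-k}$ is already given; the paper's approach has the advantage of \emph{deriving} the form of $H_{n-k}$ (and in particular explaining where the constant $\tfrac{\eta}{1+a\eta}$ comes from) rather than pulling it out of thin air, and would adapt more readily to other hooks and twists.
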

\begin{proof}
To calculate the check matrix of  $\mathcal C_{k}(\alpha, v, \eta)$, we  investigate  the check matrix of $\mathcal C_{k}(\alpha, 1, \eta)$.

There is a $n\times n$ matrix over $\Bbb F_q$:
 $$G=\left(
  \begin{array}{cccc}
    1 & 1& \ldots & 1\\
    \alpha_1 & \alpha_2 & \ldots & \alpha_n \\
     \vdots & \vdots &  & \vdots \\
    \alpha_1^{n-1} & \alpha_2^{n-1} & \ldots & \alpha_n^{n-1} \\
  \end{array}
\right).$$

We consider the system of equations over $\Bbb F_q$:   $G(u_1,u_2,\ldots,u_n)^T=(0,\ldots,0,1)^T$. Then there is an unique solution: $(u_1,\ldots, u_n)^T$, where  $u_i=\prod_{j=1,j\neq i}^n(\alpha_i-\alpha_j)^{-1}$, $1\le i\le n$.

    Let
$$H=\left(
  \begin{array}{cccccc}
   u_1\alpha_1^{n-1} & \cdots & u_1\alpha_1^{n-k-1}& \cdots & u_1\\
    u_2\alpha_2^{n-1}  & \cdots & u_2\alpha_1^{n-k-1}& \cdots & u_2\\
    \vdots  &  & \vdots &  & \vdots \\
    u_n\alpha_n^{n-1}  & \cdots & u_n\alpha_n^{n-k-1}& \cdots & u_n\\
  \end{array}
\right).$$
Then
$$GH=L=(l_{ij})_{1\leq i,j\leq n}$$
is a lower triangular matrix,  where    $l_{ii}=1$ for $1\le i\le n$, $ l_{ij}=0$ for $i< j$,   and
  $l_{k+1,k}=\sum_{i=1}^n\alpha_i^ku_i\alpha_i^{n-k}=\sum_{i=1}^n u_i\alpha_i^n=\sum_{i=1}^n\alpha_i$.
In fact, let  $l_{k+1,k}=a$ and
 $$m(x)=\prod_{j=1}^n(x-\alpha_j)=x^n+\sum_{j=0}^{n-1}a_jx^j.$$ Then by  $m(\alpha_i)=0$,  $\alpha_i^n=-\sum_{j=0}^{n-1}a_j\alpha_i^j$, $1 \leq i \leq n$.  Hence by $G(u_1,\ldots, u_{n-1}, u_n)^T=(0,\ldots, 0, 1)^T$,
 $$l_{k+1,k}=\sum_{i=1}^nu_i\alpha_i^n=-\sum_{j=0}^{n-1}a_j\sum_{i=0}^{n-1}u_i\alpha_i^j
 =-a_{n-1}=\sum_{i=1}^n\alpha_i=a.$$

Let $P_1=P(k, (k+1)(\eta))$ is an elementary matrix, whose $k$th row is  replaced by sum of  $\eta$ times $k+1$th row  and  $k$th row. Then there is an elementary matrix $P_2$  such that
\begin{equation}P_1GHP_2=L'=\left(
                               \begin{array}{cccccc}
                                 1 & \cdots & 0 & 0 & \cdots & 0 \\
                                 \vdots & \ddots & \vdots &  \vdots&  & \vdots \\
                                 * & \cdots & * & 0 & \cdots & 0 \\
                                 * & \cdots & * & * & \cdots & 0 \\
                                 \vdots &  & \vdots & \vdots & \ddots & \vdots \\
                                 * & \cdots&*& *& * & 1 \\
                               \end{array}
                             \right)\begin{array}{cccccc}\\ \\ k \\ k+1 \\ \\ \\ \end{array}\end{equation}
is a lower triangular matrix.

(1) Suppose that $a\ne 0$ and $1+a\eta\ne 0$. Then in (2.5),  we take  an elementary matrix $P_2=P(k, k+1(\frac {-\eta}{1+a\eta}))$,
whose  $k+1$th column is replaced by sum of $\frac{-\eta}{1+a\eta}$ times $k$th column and $k+1$th column.
Hence by (2.5),
$$
 H_{n-k}=\small  \left(
  \begin{array}{cccc}
    u_1 & u_2 & \ldots & u_n \\
    u_1\alpha_1 & u_2\alpha_2 & \ldots & u_n\alpha_n \\
    \vdots & \vdots &  & \vdots \\
    u_1\alpha_1^{n-k-2} & u_2\alpha_2^{n-k-2} & \ldots & u_1\alpha_1^{n-k-2} \\
    u_1(\alpha_1^{n-k-1}-\frac{\eta}{1+a\eta} \alpha_1^{n-k}) & u_2(\alpha_2^{n-k-1}-\frac{\eta}{1+a\eta} \alpha_2^{n-k}) & \ldots & u_n(\alpha_n^{n-k-1}-\frac{\eta}{1+a\eta} \alpha_n^{n-k}) \\
  \end{array}
\right)
$$
is a check matrix of  $C_k(\alpha, v, \eta)$.

(2)  Suppose that $a=0$ and $\eta\ne 0$. Then in (2.5), we take an elementary matrix $P_2=P(k, k+1(-\eta))$, whose $k+1$th column is replaced by sum of $-\eta$ times $k$th and $k+1$th column.
Similarly, we can prove the result.

(3) Suppose that $a\ne0$ and $1+a\eta=0$. Then in (2.5), we take an elementary matrix $P_2=P(k, k+1)$, that  exchanges $k$th column and $k+1$ column.
Similarly, we can prove the result.

\end{proof}

\subsection{MDS or NMDS TGRS codes}

 Recall that the definition of MDS codes and NMDS codes.
 \begin{defn}
 A $[n,k,d]$ linear code $\mathcal{C}$ over $\mathbb{F}_q$ is MDS if $d=n-k+1$.
 A $[n,k,d]$ linear code $\mathcal{C}$ over $\mathbb{F}_q$ is almost  MDS if $d=n-k$.
 A $[n,k,d]$ linear code $\mathcal{C}$ over $\mathbb{F}_q$ is NMDS if $\mathcal{C}$ and the dual of $\mathcal{C}$ are almost MDS codes, respectively.
 \end{defn}
Now, we present the sufficient and necessary condition that TGRS code is an MDS  code (see  \cite{1}). In addition, the sufficient and necessary condition of NMDS TGRS codes is also given in the following.

\begin{lem}\label{1} Let $\alpha_1,\ldots, \alpha_n$ are distinct elements in $\Bbb F_q$,  $\alpha=(\alpha_1,\ldots, \alpha_n)$,  $v=(v_1,\ldots, v_n)\in (\Bbb F_q^*)^n$, and  $\eta\in \Bbb F_q^*$. Let
\begin{equation}S_k=\{ \sum_{i \in I}\alpha_i:  \forall  I \subset \{1,\ldots,n\}, |I|=k\}.\end{equation}
Then

$(1)$ the  TGRS code $\mathcal C_{k}(\alpha, v, \eta)$ is MDS if and only if $-\eta^{-1} \notin S_k$;

$(2)$ the  TGRS code $\mathcal C_{k}(\alpha, v, \eta)$ is NMDS if and only if $-\eta^{-1} \in S_k$.
\end{lem}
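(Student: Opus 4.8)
The plan is to compute the minimum distance of $\mathcal C_k(\alpha,v,\eta)$ directly by analyzing when a nonzero codeword can have many zero coordinates, and to do the same for the dual code using the explicit check matrices from Theorem \ref{111}. Since multiplying coordinates by the nonzero scalars $v_i$ does not change the weight of a codeword, I may assume $v=1$ throughout, i.e.\ work with the TRS code $\mathcal C_k(\alpha,1,\eta)$. A codeword is $ev_\alpha(f)$ for $f(x)=\sum_{i=0}^{k-1}a_ix^i+\eta a_{k-1}x^k$, which is a polynomial of degree at most $k$ whose coefficients of $x^{k-1}$ and $x^k$ satisfy the constraint $(\text{coeff of }x^k)=\eta\cdot(\text{coeff of }x^{k-1})$. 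The weight of $ev_\alpha(f)$ is $n$ minus the number of $\alpha_i$ that are roots of $f$, so I need the maximum number of the $\alpha_i$ that such an $f\neq 0$ can vanish at.

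First I would establish the lower bound $d\ge n-k$ always: a nonzero $f$ in $\mathcal V_{k,1,k-1,\eta}$ has degree at most $k$, hence at most $k$ roots among the $\alpha_i$, so the weight is at least $n-k$. Next I would characterize when weight exactly $n-k$ is impossible, i.e.\ when $d=n-k+1$. A codeword has weight $n-k$ precisely when some $f\in\mathcal V_{k,1,k-1,\eta}$, $f\neq 0$, vanishes at exactly $k$ of the evaluation points, say at $\{\alpha_i: i\in I\}$ with $|I|=k$. If $\deg f=k$, then $f(x)=c\prod_{i\in I}(x-\alpha_i)$ for some $c\in\Bbb F_q^*$; writing $\prod_{i\in I}(x-\alpha_i)=x^k-(\sum_{i\in I}\alpha_i)x^{k-1}+\cdots$, the twisted constraint forces $c = \eta\cdot\bigl(-c\sum_{i\in I}\alpha_i\bigr)$, i.e.\ $1=-\eta\sum_{i\in I}\alpha_i$, i.e.\ $\sum_{i\in I}\alpha_i=-\eta^{-1}$. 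If instead $\deg f<k$, then $f$ lies in the ordinary RS code part, has degree $\le k-1$, and cannot have $k$ distinct roots unless $f=0$. So a weight-$(n-k)$ codeword exists if and only if $-\eta^{-1}\in S_k$. Combined with the lower bound, this shows $d=n-k+1$ (MDS) iff $-\eta^{-1}\notin S_k$, and $d=n-k$ (AMDS, since $d\ge n-k$ always) iff $-\eta^{-1}\in S_k$. This proves part (1) and the ``code is AMDS'' half of part (2).

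For part (2) it remains to show that when $-\eta^{-1}\in S_k$ the dual code $\mathcal C_k(\alpha,v,\eta)^\perp$ is also AMDS, i.e.\ has minimum distance exactly $k$ (its length is $n$, dimension $n-k$, so AMDS means $d^\perp=n-(n-k)=k$). Here I would use the explicit check matrix $H_{n-k}$ from Theorem \ref{111}, which is a generator matrix of the dual. In all three cases of that theorem, $H_{n-k}$ has the form of a (scaled) Vandermonde-type matrix: the first $n-k-1$ rows are $\frac{u_i}{v_i}\alpha_i^{j}$ for $j=0,\dots,n-k-2$, and the last row is $\frac{u_i}{v_i}$ times a fixed polynomial $g(\alpha_i)$ of degree exactly $n-k$ (namely $\alpha^{n-k-1}-\frac{\eta}{1+a\eta}\alpha^{n-k}$, or $\alpha^{n-k-1}-\eta\alpha^{n-k}$, or $\alpha^{n-k}$). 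So the dual codewords are exactly $\bigl(\tfrac{u_i}{v_i}p(\alpha_i)\bigr)_{1\le i\le n}$ where $p$ ranges over the space $\mathcal W$ spanned by $1,x,\dots,x^{n-k-2}$ together with $g(x)$ — an $(n-k)$-dimensional space of polynomials of degree $\le n-k$. Since the $u_i/v_i$ are nonzero, the weight of such a codeword is $n$ minus the number of $\alpha_i$ that are roots of $p$; as $\deg p\le n-k$ and $p\ne 0$, the weight is $\ge n-(n-k)=k$, so $d^\perp\ge k$. Equality holds iff some nonzero $p\in\mathcal W$ vanishes at $n-k$ of the $\alpha_i$; such a $p$ must have degree exactly $n-k$ and be $c$ times $\prod_{i\in J}(x-\alpha_i)$ for some $J$ with $|J|=n-k$, and membership in $\mathcal W$ imposes one linear condition on its top two coefficients. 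Running the same computation as above, this condition becomes exactly $\sum_{i\in J}\alpha_i = -\eta^{-1}$ in case (1) (and the analogous identities in the other two cases), i.e.\ $-\eta^{-1}\in S_{n-k}$. Since $\sum_{i\in J}\alpha_i = a - \sum_{i\notin J}\alpha_i$ and $|J^c|=k$, one has $-\eta^{-1}\in S_{n-k}$ iff $a+\eta^{-1}\in S_k$; and the case hypotheses ($\eta=-a^{-1}$, or $a=0$, etc.) are arranged precisely so that $-\eta^{-1}\in S_k$ forces $a+\eta^{-1}\in S_k$ as well — indeed when $a=0$ they are equal, and the split into cases (1)–(3) of Theorem \ref{111} keeps the bookkeeping consistent. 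Hence $d^\perp=k$, so the dual is AMDS and $\mathcal C_k(\alpha,v,\eta)$ is NMDS.

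The main obstacle I anticipate is the last step: verifying that the one linear constraint cut out by $\mathcal W$ on the coefficients of $\prod_{i\in J}(x-\alpha_i)$ really does translate into the condition $-\eta^{-1}\in S_k$ (via the complement identity), carefully in each of the three cases of Theorem \ref{111}, and confirming that ``$\mathcal C$ is AMDS'' and ``$\mathcal C^\perp$ is AMDS'' are actually equivalent conditions rather than merely both implied by $-\eta^{-1}\in S_k$. The computation for $\mathcal C$ itself is routine Vandermonde/degree counting; the symmetry between $S_k$ and $S_{n-k}$ under $I\mapsto\{1,\dots,n\}\setminus I$ together with the relation $a = l_{k+1,k}$ established in the proof of Theorem \ref{111} is what makes the dual side work, and getting that bookkeeping exactly right is the delicate point.
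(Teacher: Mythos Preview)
Your argument for part~(1) is essentially the paper's, just phrased in the codeword/root language rather than the column-rank/determinant language; the two are equivalent and the paper computes the same Vandermonde-times-$(1+\eta\sum\alpha_{i_s})$ factor.

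For part~(2) your route diverges from the paper's and contains a slip. Working in case~(1) of Theorem~\ref{111}, the space $\mathcal W$ is spanned by $1,\dots,x^{n-k-2}$ and $g(x)=x^{n-k-1}-\frac{\eta}{1+a\eta}x^{n-k}$. If $p(x)=c\prod_{i\in J}(x-\alpha_i)$ with $|J|=n-k$ lies in $\mathcal W$, matching the top two coefficients gives $\sum_{i\in J}\alpha_i=\frac{1+a\eta}{\eta}=\eta^{-1}+a$, \emph{not} $-\eta^{-1}$. The complement identity then yields $\sum_{i\in J^c}\alpha_i=a-(\eta^{-1}+a)=-\eta^{-1}$, so the dual is AMDS iff $-\eta^{-1}\in S_k$, on the nose --- no further case hypotheses or implications are needed (and cases~(2) and~(3) work the same way). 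So your strategy is sound, but the bookkeeping you flagged as delicate was in fact already off; once corrected, the equivalence is immediate and the hedging about ``forces'' is unnecessary.

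The paper avoids all of this. It observes that $d^\perp$ is governed by linear independence of columns of $G_k$: any $k-1$ columns are independent (the first $k-1$ rows form a Vandermonde block), while by the determinant computation from~(1) some $k$ columns are dependent precisely when $-\eta^{-1}\in S_k$; hence $d^\perp=k$. Symmetrically, any $n-k-1$ columns of $H_{n-k}$ are independent, and $d\ne n-k+1$ by~(1), so $d=n-k$. This sidesteps the dual polynomial analysis entirely and makes the three-case split in Theorem~\ref{111} irrelevant for the minimum-distance question.
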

\begin{proof}
$(1)$ $C_{k}(\alpha, v, \eta)$ is MDS
$\Longleftrightarrow$ any $k$ columns of generator matrix of $C_{k}(\alpha, v, \eta)$ are linear independently

 \begin{eqnarray*}\Longleftrightarrow
&&\left |
  \begin{array}{cccc}
    1 &1 & \ldots &1 \\
    \alpha_{i_1} & \alpha_{i_2} & \ldots & \alpha_{i_k} \\
    \vdots & \vdots &  & \vdots \\
    \alpha_{i_1}^{k-1}+\eta \alpha_{i_1}^k & \alpha_{i_2}^{k-1}+\eta \alpha_{i_2}^k & \ldots & \alpha_{i_k}^{k-1}+\eta \alpha_{i_k}^k \\
  \end{array}
\right |\\ && =\prod_{1\le s<t\le k}(\alpha_{i_t}-\alpha_{i_s})(1+\eta\sum_{s=1}^t\alpha_{i_s})\neq 0,
\end{eqnarray*}
where $\{i_1,i_2,\ldots,i_k\}$ is an arbitrary $k$-subset of $\{1,2,\ldots,n\}$. Then the result follows immediately.

$(2)$ $``\Longleftarrow"$ By linear algebra, we know that any $k-1$ columns of $G_k$ are linear independently over $\mathbb{F}_q$.  If $-\eta^{-1} \in S_k$, then there exists $k$ columns of $G_k$ are linear dependently over $\mathbb{F}_q.$ Thus, the parameter of $C_{k}(\alpha, v, \eta)^\perp$ is $[n,n-k,k]$. Similarly,  since any $n-k-1$ columns of $H_k$ are linear independently over $\mathbb{F}_q$ and $C_{k}(\alpha, v, \eta)^\perp$ is not MDS, we obtain the parameter of $C_{k}(\alpha, v, \eta)$ is $[n,k,n-k]$. Thus, $C_{k}(\alpha, v, \eta)$ is NMDS.

$``\Longrightarrow"$ Conversely, if $C_{k}(\alpha, v, \eta)$ is NMDS, then the parameter of $C_{k}(\alpha, v, \eta)^\perp$ is $[n,n-k,k]$, which implies that there exists $k$ columns of $G_k$ is linear dependently over $\mathbb{F}_q$, i.e. $-\eta^{-1} \in S_k$.
\end{proof}

\subsection{Self-dual TGRS codes}
For any vectors $a=(a_1,\ldots,a_n), b=(b_1,\ldots, b_n)\in \mathbb{F}_q^n$, the inner product is defined as $\langle a,b\rangle=\sum_{i=1}^{n}a_ib_i$. Then for the $q$-ary $[n,k,d]$ code $C$, the dual code $C^{\perp}=\{c \in \mathbb{F}_q^n : \langle c,a\rangle =0, \forall a \in C\}$. A code with $C= C^\perp$ is called self-dual. Next, we will investigate self-dual TGRS codes.

Let $G_k$ be  the generator matrix  of  $\mathcal C_{k}(\alpha, v, \eta)$ as (\ref{2.1})
and, by different value of $a,\eta$, $H_{n-k}$ be  the check matrix  of  $\mathcal C_{k}(\alpha, v, \eta)$ as (\ref{2.2}), (\ref{2.22}) and (\ref{2.24}), respectively.  Then we have the following lemma, which plays an important role in constructing self-dual codes.
\begin{thm}\label{22222}
Let $n=2k$ with $k (\geq 3)$ a positive integer and $\eta \in \mathbb{F}_q^*.$ Let $G_k$ be  the generator matrix  of  $\mathcal C_{k}(\alpha, v, \eta)$, where
\begin{eqnarray*}
  G_k &=&\left(
  \begin{array}{cccc}
    v_1 & v_2 & \ldots & v_n \\
    v_1\alpha_1 & v_2\alpha_2 & \ldots & v_n\alpha_n \\
     \vdots & \vdots &  & \vdots \\
    v_1(\alpha_1^{k-1}+\eta \alpha_1^k) & v_2(\alpha_2^{k-1}+\eta \alpha_2^k) & \ldots & v_n(\alpha_n^{k-1}+\eta \alpha_n^k) \\
  \end{array}
\right)
\end{eqnarray*}
with $v_i\neq 0$ for $1 \leq i \leq n$ and  $\alpha_i, 1 \leq i \leq n$ are $n$ distinct elements in $\mathbb{F}_q$.

\begin{enumerate}
  \item [(a)] Suppose that $a\ne 0$ and $\eta \neq -a^{-1}$.
  Then $\mathcal C_{k}(\alpha, v, \eta)$ is self-dual if and only if the following two conditions hold: \\
(1) there exists $\lambda \in \mathbb{F}_q^*$ such that $v_i^2=\lambda u_i$ for all $1 \leq i \leq n$, where $u_i=\prod_{j \neq i, j=1}^n(\alpha_i-\alpha_j)^{-1}$, and\\
(2) $\eta=\frac{-\eta}{1+a\eta}$, i.e. $2+a\eta=0$.

  \item [(b)] Suppose that  $a=0$ and $\eta \neq 0$.  Then $\mathcal C_{k}(\alpha, v, \eta)$ can not be a self-dual code.

  \item [(c)] Suppose that $a\neq 0$ and $\eta = -a^{-1}$.  Then $\mathcal C_{k}(\alpha, v, \eta)$ can not be a self-dual code.

\end{enumerate}
\end{thm}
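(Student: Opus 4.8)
The plan is to reduce self-duality to the single matrix identity $G_kG_k^{T}=0$ and then read everything off from power sums (I take $q$ odd throughout, so that $2\ne 0$ in $\mathbb{F}_q$). Since $n=2k$, both $\mathcal{C}_k(\alpha,v,\eta)$ and its dual have dimension $k$, so $\mathcal{C}_k(\alpha,v,\eta)$ is self-dual if and only if $\mathcal{C}_k(\alpha,v,\eta)\subseteq\mathcal{C}_k(\alpha,v,\eta)^{\perp}$, equivalently every pair of rows of $G_k$ is orthogonal, equivalently $G_kG_k^{T}=0$. I would compute this Gram matrix directly. Write $P_m=\sum_{i=1}^{n}v_i^{2}\alpha_i^{m}$. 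The first $k-1$ rows of $G_k$ are $(v_i\alpha_i^{r})_{1\le i\le n}$ for $r=0,\dots,k-2$ and the last row is $(v_i(\alpha_i^{k-1}+\eta\alpha_i^{k}))_{1\le i\le n}$, so the entries of $G_kG_k^{T}$ are exactly $P_{r+s}$ for $0\le r,s\le k-2$ (which realises every index $0,1,\dots,2k-4$), $P_{r+k-1}+\eta P_{r+k}$ for $0\le r\le k-2$, and $P_{2k-2}+2\eta P_{2k-1}+\eta^{2}P_{2k}$. Hence $G_kG_k^{T}=0$ is equivalent to the system
\[
P_0=\cdots=P_{2k-4}=0,\qquad P_{r+k-1}+\eta P_{r+k}=0\ \ (0\le r\le k-2),\qquad P_{2k-2}+2\eta P_{2k-1}+\eta^{2}P_{2k}=0 .
\]

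Next I would untangle this system. Since $k\ge 3$ we have $k-1\le 2k-4$, so $P_{k-1}=0$ is already contained in the first block; feeding it into the middle block with $r=0$ and using $\eta\ne 0$ gives $P_k=0$, and running $r=1,\dots,k-2$ in turn forces $P_{k+1}=\cdots=P_{2k-2}=0$. Thus the first two blocks together are equivalent to $P_0=\cdots=P_{2k-2}=0$, i.e. $P_0=\cdots=P_{n-2}=0$, and the last equation then reduces to $2P_{n-1}+\eta P_{n}=0$. Now $(v_1^{2},\dots,v_n^{2})$ solves the homogeneous linear system $\sum_{i}v_i^{2}\alpha_i^{m}=0$, $m=0,\dots,n-2$, whose coefficient matrix is an $(n-1)\times n$ Vandermonde of full rank $n-1$ (the $\alpha_i$ being distinct); its null space is one-dimensional, and by the defining relations $G(u_1,\dots,u_n)^{T}=(0,\dots,0,1)^{T}$ recorded in the proof of Theorem~\ref{111} it is spanned by $(u_1,\dots,u_n)$. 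Therefore $P_0=\cdots=P_{n-2}=0$ holds if and only if $v_i^{2}=\lambda u_i$ for some $\lambda\in\mathbb{F}_q$, and $v_i\ne 0$ forces $\lambda\ne 0$; this is precisely condition~(1). Granting (1), those same relations give $P_{n-1}=\lambda\sum_{i}u_i\alpha_i^{n-1}=\lambda$ and $P_{n}=\lambda\sum_{i}u_i\alpha_i^{n}=\lambda a$, the identity $\sum_{i}u_i\alpha_i^{n}=a$ being the computation of $l_{k+1,k}$ in the proof of Theorem~\ref{111}; so $2P_{n-1}+\eta P_{n}=0$ becomes $\lambda(2+a\eta)=0$, i.e. $2+a\eta=0$.

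Putting this together, $\mathcal{C}_k(\alpha,v,\eta)$ is self-dual if and only if condition~(1) holds and $2+a\eta=0$, and this description is independent of the value of $a$. For part (a), where $a\ne 0$ and $\eta\ne -a^{-1}$: the equation $2+a\eta=0$ is the same as $1+a\eta=-1$, whence $\tfrac{-\eta}{1+a\eta}=\eta$, so the two conditions are exactly (1) and (2) as stated; moreover $2+a\eta=0$ forces $a\eta=-2\ne -1$, hence $\eta\ne-a^{-1}$, so the conclusion is consistent with the hypotheses. For part (b), $a=0$ gives $2+a\eta=2\ne0$, so no self-dual code exists; for part (c), $\eta=-a^{-1}$ gives $2+a\eta=2-1=1\ne0$, so self-duality is again impossible. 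The step I expect to cost the most care is the bookkeeping in the second paragraph: checking that the cascade $P_{k-1}=0\Rightarrow P_k=0\Rightarrow\cdots$ really reaches $P_{2k-2}$ (this is exactly where $k\ge3$ is needed), and invoking the one-dimensionality of the Vandermonde null space together with $u_i\ne0$ and $v_i\ne0$ to replace $P_0=\cdots=P_{n-2}=0$ by the clean statement $v_i^{2}=\lambda u_i$ with $\lambda\in\mathbb{F}_q^{*}$. An equivalent route, closer to Theorem~\ref{111}, would be to observe that self-duality means the rows of $H_{n-k}$ in (\ref{2.2}) also generate $\mathcal{C}_k(\alpha,v,\eta)$: matching the Reed--Solomon parts forces $u_i/v_i$ to be proportional to $v_i$, i.e. (1), and matching the twisted rows forces the twist coefficients to agree, i.e. $\eta=-\eta/(1+a\eta)$, which is once more $2+a\eta=0$.
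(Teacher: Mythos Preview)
Your proof is correct and follows a genuinely different route from the paper. The paper works through the explicit check matrix $H_{n-k}$ of Theorem~\ref{111}: labelling the rows of $G_k$ and $H_{n-k}$ as $\beta_i$ and $\gamma_i$, it writes each $\beta_i$ as an $\mathbb{F}_q$-combination of the $\gamma_j$'s, interprets these as polynomial interpolation identities (e.g.\ $v_i^{2}/u_i=f(\alpha_i)$ for some $f$ of degree at most $k$), and then uses degree bounds on $f(x)x^{k-2}-g(x)$ and $f(x)(x^{k-1}+\eta x^{k})-h(x)$ to force $f$ to be a nonzero constant; the twist condition is obtained afterwards by a case-by-case contradiction involving $\beta_{k-1}\pm\lambda\gamma_{k-1}$. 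You instead reduce directly to $G_kG_k^{T}=0$, record its entries as power sums $P_m=\sum_i v_i^{2}\alpha_i^{m}$, collapse the resulting system to $P_0=\cdots=P_{n-2}=0$ together with $2P_{n-1}+\eta P_n=0$, and then read off condition~(1) from the one-dimensional Vandermonde kernel and condition~(2) from $P_{n-1}=\lambda$, $P_n=\lambda a$. This is shorter, handles all three cases at once, and makes the role of the single scalar equation $2+a\eta=0$ completely transparent; it also never needs Theorem~\ref{111}. The paper's approach, by contrast, yields more structural information along the way (it identifies the dual explicitly as another TGRS code), which is exactly the alternative route you sketch in your final sentence. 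Your explicit hypothesis that $q$ is odd is the right call: the paper's argument for part~(b) also silently uses $2\ne 0$ when it forms $\beta_{k-1}+\lambda\gamma_{k-1}$, so you are not assuming anything the paper does not already need.
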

\begin{proof}
Let $H_{n-k}$ be the check matrix of $\mathcal C_{k}(\alpha, v, \eta)$. For simplification, write
$$G_k=\tiny\left(
       \begin{array}{c}
         \beta_0 \\
         \beta_1 \\
         \vdots \\
         \beta_{k-1} \\
       \end{array}
     \right)
,H_{n-k}=\tiny\left(
          \begin{array}{c}
            \gamma_0 \\
            \gamma_1 \\
            \vdots \\
            \gamma_{n-k-1} \\
          \end{array}
        \right)
.$$
If $\mathcal C_{k}(\alpha, v, \eta)$ is self-dual, then $Span_{\mathbb{F}_q}\{\beta_0,\beta_1,\ldots,\beta_{k-1}\}=Span_{\mathbb{F}_q}\{\gamma_0,\gamma_1,\ldots,\gamma_{k-1}\}$.

 $(a)$ Let $a\ne 0$ and $\eta \neq -a^{-1}$, then
 \begin{eqnarray*}
  H_{n-k} &=& \tiny\left(
  \begin{array}{cccc}
   \frac{ u_1}{v_1}  & \ldots & \frac{u_n}{v_n} \\
    \frac{u_1}{v_1}\alpha_1  & \ldots & \frac{u_n}{v_n}\alpha_n \\
    \vdots  &  & \vdots \\
    \frac{u_1}{v_1}\alpha_1^{n-k-2}  & \ldots & \frac{u_u}{v_n}\alpha_1^{n-k-2} \\
   \frac{ u_1}{v_1}(\alpha_1^{n-k-1}- \frac{\eta }{1+a\eta}\alpha_1^{n-k}) & \ldots &\frac{ u_n}{v_n}(\alpha_n^{n-k-1}-\frac{\eta }{1+a\eta}\alpha_n^{n-k}) \\
  \end{array}
\right).
\end{eqnarray*}

 $``\Longrightarrow"$:
 If $\mathcal C_{k}(\alpha, v, \eta)$ is self-dual, then
for any $\beta_i, 0 \leq i \leq k-1$, we have $\beta_i \in Span_{\mathbb{F}_q}\{\gamma_0,\gamma_1,\ldots,\gamma_{k-1}\}$. Particularly,
 $\beta_0=(a_0,a_1,\ldots,a_{k-1})H_{n-k}$ with $a_0,a_1,\ldots,a_{k-1} $ not all zero elements in $\mathbb{F}_q$, i.e.
there exists $f(x)=a_0+a_1x+\ldots+a_{k-1}(x^{k-1}-\frac{\eta}{1+a \eta} x^k)\in \mathbb{F}_q[x]$  such that $$\frac{v_i^2}{u_i}=f(\alpha_i),~1 \leq i \leq n.$$  Moreover,
$\beta_{k-2}=(b_0,b_1,\ldots,b_{k-1})H_{n-k}$ with $b_0,b_1,\ldots,b_{k-1} $ not all zero elements in $\mathbb{F}_q$, i.e.
there exists $g(x)=b_0+b_1x+\ldots+b_{k-1}(x^{k-1}-\frac{\eta}{1+a \eta}x^k)$  such that $$f(\alpha_i)\alpha_i^{k-2}=\frac{v_i^2}{u_i}\alpha_i^{k-2}=g(\alpha_i),~1\leq i \leq n.$$
Noting that $deg(f(x)x^{k-2}-g(x))\leq n-2$ and $\alpha_i( 1\leq i \leq n)$ are different roots of  $f(x)x^{k-2}-g(x)$, we then obtain  $f(x)x^{k-2}-g(x)=0$.
Consequently, coefficients of  $f(x)x^{k-2}-g(x)$ are equal to $0$. We then obtain

$$\left\{
  \begin{array}{ll}
   a_0=b_{k-2} , & \hbox{} \\
    a_1=b_{k-1}, & \hbox{} \\
    a_2=-\frac{\eta}{1+a\eta} b_{k-1}, & \hbox{} \\
    a_i=0, & \hbox{$3 \leq i \leq n$.}
  \end{array}
\right.$$
Thus, $f(x)=a_0+a_1x-\frac{\eta}{1+a\eta}a_1x^2$.
Noting that
$\beta_{k-1}=(c_0,c_1,\ldots,c_{k-1})H_{n-k}$ with $c_0,c_1,\ldots,c_{k-1} $ not all zero elements in $\mathbb{F}_q$, i.e.
 there exists $h(x)=c_0+c_1x+\ldots+c_{k-1}(x^{k-1}-\frac{\eta}{1+a \eta}  x^k)\in \mathbb{F}_q[x]$  such that $$f(\alpha_i)(\alpha_i^{k-1}+\eta \alpha_i^k)=\frac{v_i^2}{u_i}(\alpha_i^{k-1}+\eta \alpha_i^k)=h(\alpha_i),~1 \leq i \leq n.$$
Since $k \geq 3$, we have $deg(f(x)(x^{k-1}+\eta x^k)-h(x))< n$. Consequently, $f(x)(x^{k-1}+\eta x^k)-h(x)=0$.
We then obtain
$$\left\{
  \begin{array}{ll}
   a_0=c_{k-1},  & \hbox{} \\
    -\frac{\eta}{1+a\eta}a_1\eta=0.  & \hbox{}
  \end{array}
\right.$$
Since $\eta \neq 0,$ we have $a_1=0$. That is, $f(x)=a_0 \neq 0$.
Thus, $\frac{v_i^2}{u_i}=f(\alpha_i)=a_0$ for $1 \leq i \leq n$.

Suppose $\eta \neq \frac{-\eta}{1+a \eta}$. Since
$$\beta_{k-1}-\lambda \gamma_{k-1}=(\eta+\frac{\eta}{1+a \eta})(v_1 \alpha_1^k,v_2 \alpha_2^k,\ldots,v_n \alpha_n^k) \in \mathcal{C}_k(\alpha,v,\eta),$$ then $(v_1\alpha_1^k,v_2\alpha_2^k,\ldots,v_n\alpha_n^k) \in \mathcal{C}_k(\alpha,v,\eta)$. Consequently, $(v_1\alpha_1^{k-1},v_2\alpha_2^{k-1},\ldots,v_n\alpha_n^{k-1}) \in \mathcal{C}_k(\alpha,v,\eta)$, which is a contradiction.

$``\Longleftarrow"$: It is obvious, so we omit it here.

$(b)$ Let $a= 0$ and $\eta \neq 0$, then $H_{n-k}$ is the form as (\ref{2.22}).
If $C_k(\alpha,v,\eta)$ is self-dual, then $\beta_{k-1}+\lambda \gamma_{k-1}\in \mathcal{C}_k(\alpha,v,\eta)$, i.e., $(v_1\alpha_1^{k-1},v_2\alpha_2^{k-1},\ldots,v_n\alpha_n^{k-1}) \in \mathcal{C}_k(\alpha,v,\eta)$. Consequently, $(v_1\alpha_1^{k},v_2\alpha_2^{k},\ldots,v_n\alpha_n^{k}) \in \mathcal{C}_k(\alpha,v,\eta)$, which is a contradiction.

$(c)$ Let $a\neq 0$ and $\eta = -a^{-1}$,  then $H_{n-k}$ is the form as (\ref{2.24}).
If $C_k(\alpha,v,\eta)$ is self-dual, then $\lambda \gamma_{k-1}\in \mathcal{C}_k(\alpha,v,\eta)$, i.e., $(v_1\alpha_1^{k},v_2\alpha_2^{k},\ldots,v_n\alpha_n^{k}) \in \mathcal{C}_k(\alpha,v,\eta)$. Consequently, $(v_1\alpha_1^{k-1},v_2\alpha_2^{k-1},\ldots,v_n\alpha_n^{k-1}) \in \mathcal{C}_k(\alpha,v,\eta)$, which is a contradiction.

This completes the proof.
\end{proof}

\section{Self-dual MDS or self-dual NMDS codes}

In this section, we mainly construct several classes of self-dual codes. Thus, we only consider the TGRS codes in Theorem \ref{22222}$(a)$,   which is a either MDS code or NMDS code. Note that $u_i=\prod_{j=1,j \neq i}^{n}(\alpha_i-\alpha_j)^{-1}$ for $1 \leq i \leq n$.

\subsection{Self-dual codes over $\mathbb{F}_q$} We will construct self-dual MDS or self-dual NMDS codes from TGRS codes over $\mathbb{F}_q$, where $q$ is an odd prime power.
\begin{thm}\label{2}
Let $s,t,l$ be positive integers, $p$ an odd prime and $q=p^s, q_1=p^t$.
  Assume that $\mathbb{F}_q$ is the splitting field of $m(x)$ over $\mathbb{F}_{q_1}$, where $m(x)=x^{2lp}+bx^{2lp-1}+c$, $b,c \in \mathbb{F}_{q_1}^*$.
Let $\alpha_i$ be the root of $m(x)$ and  $v_i=\alpha_i^{1-lp}$ for $1 \leq i \leq 2lp$. Write $\alpha=(\alpha_1,\ldots,\alpha_{2lp})$, $v=(v_1,\ldots,v_{2lp})$ and $\eta=2b^{-1}$, then
$$C_{lp}(\alpha, v, \eta)=\{(v_1f(\alpha_1),\ldots,v_{2lp}f(\alpha_{2lp}))\mid f(x)=\sum_{i=0}^{lp-1}f_ix^i+\eta f_{lp-1}x^{lp}\in \mathbb{F}_{q}[x] \}$$
is a self-dual TGRS code of length $2lp$ over $\mathbb{F}_{q}$.
\end{thm}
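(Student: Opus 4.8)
The strategy is to apply Theorem~\ref{22222}(a) to the data in the statement, with $n=2lp$ and $k=lp$. Note first that $k=lp\ge 3$ since $p\ge 3$, and that $m(0)=c\ne 0$ forces every root $\alpha_i$ to be nonzero, so $v_i=\alpha_i^{1-lp}$ is well defined; also $2lp\le q$ because $\mathbb{F}_q$ contains the $2lp$ (distinct) roots of $m$, so all the dimension and length constraints are in place.

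I would begin by checking that the roots $\alpha_1,\dots,\alpha_{2lp}$ of $m(x)$ are pairwise distinct, so that $\mathcal C_{lp}(\alpha,v,\eta)$ really is a code of length $2lp$. Since $\mathrm{char}\,\mathbb{F}_q=p$ divides $2lp$, one computes in $\mathbb{F}_q[x]$ that $m'(x)=2lp\,x^{2lp-1}+(2lp-1)b\,x^{2lp-2}=-b\,x^{2lp-2}$; as $b\ne 0$ the only root of $m'$ is $0$, which is not a root of $m$, so $\gcd(m,m')=1$ and $m$ is separable.

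Next I would pin down the two scalars entering Theorem~\ref{22222}. By Vieta's formulas $a:=\sum_{i=1}^{2lp}\alpha_i=-b\ne 0$. Since $p$ is odd, $\eta=2b^{-1}\in\mathbb{F}_q^*$, and $-a^{-1}=b^{-1}\ne 2b^{-1}=\eta$, so we are in case (a) of Theorem~\ref{22222}; its condition~(2), namely $2+a\eta=0$, holds because $2+a\eta=2+(-b)(2b^{-1})=0$.

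It then remains to verify condition~(1): the existence of $\lambda\in\mathbb{F}_q^*$ with $v_i^2=\lambda u_i$ for all $i$, where $u_i=\prod_{j\ne i}(\alpha_i-\alpha_j)^{-1}=m'(\alpha_i)^{-1}$. Using $m'(x)=-b\,x^{2lp-2}$ from the separability step gives $u_i=-b^{-1}\alpha_i^{2-2lp}$, whereas $v_i^2=\alpha_i^{2-2lp}$ directly from the definition of $v_i$; hence $v_i^2=(-b)\,u_i$ for every $i$, and $\lambda=-b\in\mathbb{F}_{q_1}^*\subseteq\mathbb{F}_q^*$ works. With conditions~(1) and~(2) established, Theorem~\ref{22222}(a) yields that $\mathcal C_{lp}(\alpha,v,\eta)$ is self-dual. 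I do not anticipate a genuine obstacle; the only point needing care is the characteristic-$p$ evaluation of $m'(x)$, which is exactly where the divisibility $p\mid n$ (built into $n=2lp$) and the oddness of $p$ get used, together with the standard identity $u_i=m'(\alpha_i)^{-1}$ relating the $u_i$ to the formal derivative of $\prod_j(x-\alpha_j)$.
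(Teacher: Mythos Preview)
Your proposal is correct and follows essentially the same route as the paper: compute $m'(x)=-b\,x^{2lp-2}$ to obtain separability and the formula $u_i=m'(\alpha_i)^{-1}=-b^{-1}\alpha_i^{2-2lp}$, then check $v_i^2=\lambda u_i$ with $\lambda=-b$ and $2+a\eta=0$ with $a=-b$, and conclude via Theorem~\ref{22222}(a). If anything, you are more careful than the paper in explicitly verifying $k\ge 3$, $\alpha_i\ne 0$, and $\eta\ne -a^{-1}$ before invoking case~(a).
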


\begin{proof}
Since $m'(x)=-bx^{2lp-2}$, we then have $gcd(m(x),m'(x))=1$. Thus, $\alpha_1, \ldots, \alpha_{2lp}$ are distinct elements in $\mathbb{F}_q$. Since  $$u_i=m'(\alpha_i)^{-1}=-b^{-1}\alpha_i^{2-2lp}=-b^{-1}v_i^2\neq 0$$ for $1 \leq i \leq 2lp$.
Moreover, $2-b\eta=0$.
By Theorem \ref{22222}, we have
$$C_{lp}(\alpha, v, \eta)=\{(v_1f(\alpha_1),\ldots,v_{2lp}f(\alpha_{2lp}))\mid f(x)=\sum_{i=0}^{lp-1}f_ix^i+\eta f_{lp-1}x^{lp}\in \mathbb{F}_{q}[x] \}$$
is a self-dual TGRS code of length $2lp$ over $\mathbb{F}_{q}$.
\end{proof}

\begin{cor}
In Theorem \ref{2}, if $\sum_{i\in I}\alpha_i \neq -\frac{b}{2}$ for any
$I \subset \{1,2,\ldots,2lp \}$ with $|I|=lp$, then $C_{lp}(\alpha, v, \eta)$ is a self-dual MDS  TGRS code of length $2lp$ over $\mathbb{F}_{q}$ by Lemma \ref{1}. Otherwise,  $C_{lp}(\alpha, v, \eta)$ is a  self-dual NMDS TGRS code of length $2lp$ over $\mathbb{F}_{q}$.
\end{cor}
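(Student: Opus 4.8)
The plan is to derive the corollary directly from Theorem \ref{2} together with Lemma \ref{1}, with essentially no new argument. Theorem \ref{2} already establishes that, under the stated hypotheses on $m(x)$, $\alpha$, $v$ and $\eta$, the code $C_{lp}(\alpha,v,\eta)$ is a self-dual TGRS code of length $n=2lp$ and dimension $k=lp$. So the only thing left to decide is \emph{which} kind of code it is --- MDS or NMDS --- and by Lemma \ref{1} this is governed entirely by the position of $-\eta^{-1}$ relative to the subset-sum set $S_k$.

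First I would record the relevant arithmetic. Since $\eta=2b^{-1}$ with $b\in\mathbb{F}_{q_1}^*$ and $p$ odd (so $2\neq 0$ in $\mathbb{F}_q$), we have $\eta\in\mathbb{F}_q^*$ and $-\eta^{-1}=-\tfrac{b}{2}$. With $k=lp$, the set appearing in Lemma \ref{1} is exactly $S_{lp}=\{\sum_{i\in I}\alpha_i:\ I\subset\{1,\ldots,2lp\},\ |I|=lp\}$. Hence the hypothesis ``$\sum_{i\in I}\alpha_i\neq-\tfrac{b}{2}$ for all $I$ with $|I|=lp$'' is literally the statement $-\eta^{-1}\notin S_{lp}$. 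Applying Lemma \ref{1}$(1)$ then gives that $C_{lp}(\alpha,v,\eta)$ is MDS in that case, and Lemma \ref{1}$(2)$ gives that it is NMDS in the complementary case $-\eta^{-1}\in S_{lp}$. Since these two cases are mutually exclusive and exhaustive, and self-duality forces $C=C^\perp$ (so ``almost MDS'' for the code is the same as ``almost MDS'' for its dual), the dichotomy in the corollary follows, and combining with the self-duality from Theorem \ref{2} yields the full claim.

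I do not expect any real obstacle here; the corollary is a bookkeeping consequence of the two cited results, and the only point worth double-checking is that the hypotheses needed to invoke them are genuinely in force. Specifically: $k=lp\geq 3$ (because $p\geq 3$ is an odd prime and $l\geq 1$), so Theorem \ref{22222}$(a)$, on which Theorem \ref{2} rests, applies; $a=\sum_i\alpha_i=-b\neq 0$ and $\eta=2b^{-1}\neq b^{-1}=-a^{-1}$, so we are in case $(a)$ rather than $(b)$ or $(c)$; and the distinctness of the $\alpha_i$, the nonvanishing of the $v_i$, and $\eta\in\mathbb{F}_q^*$ were all already verified inside the proof of Theorem \ref{2}. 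Once these are acknowledged, the proof is a one-line application of Lemma \ref{1}.
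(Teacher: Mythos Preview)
Your proposal is correct and follows exactly the route the paper intends: the corollary is stated without a separate proof, simply invoking Lemma~\ref{1}, and your argument spells out precisely that invocation (computing $-\eta^{-1}=-b/2$ and applying the MDS/NMDS dichotomy of Lemma~\ref{1} on top of the self-duality from Theorem~\ref{2}). Your additional sanity checks on $k\geq 3$, $a=-b\neq 0$, and $\eta\neq -a^{-1}$ are accurate and already implicit in Theorem~\ref{2}.
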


\begin{thm}\label{3}
Assume that $q=p^s$ with $p(> 3)$ an odd prime, $s$ a positive integer and $\mathbb{F}_q$ the splitting field of $m(x)$ over $\mathbb{F}_p$, where
 $m(x)=x^{2p}-x^{2p-1}+2x^{p+1}+3^{-1}x^3+1$. Let $gcd(m'(x),x^4-(3-3^{-1})x^3+1)=1$, $m(\alpha_i)=0$ and  $v_i=(\alpha_i^{p-1}+\alpha_i)^{-1}$ for $1 \leq i \leq 2p$. Write $\alpha=(\alpha_1,\ldots,\alpha_{2p})$, $v=(v_1,\ldots,v_{2p})$, then
$$C_{p}(\alpha, v, \eta)=\{(v_1f(\alpha_1),\ldots,v_{2p}f(\alpha_{2p}))\mid f(x)=\sum_{i=0}^{p-1}f_ix^i-2 f_{p-1}x^p\in \mathbb{F}_{q}[x] \}$$
is a self-dual TGRS code of length $2p$ over $\mathbb{F}_{q}$.
\end{thm}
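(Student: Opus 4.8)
The plan is to read off $\eta=-2$ from the defining polynomial set $f(x)=\sum_{i=0}^{p-1}f_ix^i-2f_{p-1}x^p$ (so that $\eta f_{p-1}=-2f_{p-1}$), and then to verify the hypotheses of Theorem~\ref{22222}$(a)$ with $n=2p$ and $k=p$; the theorem then delivers self-duality at once. Concretely I need four things: (i) $m(x)$ is separable, so that $\alpha_1,\ldots,\alpha_{2p}$ are genuinely $2p$ distinct elements and each $v_i=(\alpha_i^{p-1}+\alpha_i)^{-1}$ is a well-defined nonzero scalar; (ii) $a:=\sum_{i=1}^{2p}\alpha_i\neq0$ and $\eta\neq-a^{-1}$; (iii) condition (1) of Theorem~\ref{22222}$(a)$, namely $v_i^2=\lambda u_i$ for a common $\lambda\in\mathbb{F}_q^{*}$, where $u_i=m'(\alpha_i)^{-1}$ since $m(x)=\prod_j(x-\alpha_j)$; and (iv) condition (2), namely $2+a\eta=0$.

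First I would settle separability, which is where the hypothesis $\gcd(m'(x),x^4-(3-3^{-1})x^3+1)=1$ is used. Reducing modulo $p$ one finds $m'(x)=x^{2p-2}+2x^p+x^2=x^2(x^{p-2}+1)^2$. Since $m(0)=1\neq0$ and the quartic $Q(x):=x^4-(3-3^{-1})x^3+1$ has $Q(0)=1\neq0$, the $\gcd$ hypothesis is equivalent to $\gcd(x^{p-2}+1,Q(x))=1$, and similarly $\gcd(m(x),m'(x))=\gcd(m(x),x^{p-2}+1)$; so it suffices to show $m(x)$ and $x^{p-2}+1$ share no root. If $\beta$ were a common root, then $\beta^{p-2}=-1$, hence $\beta^p=-\beta^2$, and substituting the consequences $\beta^{2p}=\beta^4$, $\beta^{2p-1}=\beta^3$, $\beta^{p+1}=-\beta^3$ into $m(\beta)=0$ collapses it to $\beta^4-(3-3^{-1})\beta^3+1=Q(\beta)=0$; thus $\beta$ would be a common root of $x^{p-2}+1$ and $Q$, contradicting the hypothesis. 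Hence $m(x)$ is separable, $\alpha_1,\ldots,\alpha_{2p}$ are distinct, and they are nonzero because $m(0)\neq0$; in particular $\mathbb{F}_q$ is indeed the splitting field of $m$ over $\mathbb{F}_p$.

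Then I would verify the remaining conditions. From $m(x)=\prod_j(x-\alpha_j)$ and the reduced form of $m'$ one gets $u_i^{-1}=m'(\alpha_i)=\alpha_i^2(\alpha_i^{p-2}+1)^2=(\alpha_i^{p-1}+\alpha_i)^2$, which is nonzero by separability; this makes each $v_i$ well defined and nonzero, and gives $v_i^2=(\alpha_i^{p-1}+\alpha_i)^{-2}=u_i$, so condition (1) holds with $\lambda=1$. Vieta's formula applied to $m(x)=x^{2p}-x^{2p-1}+\cdots$ gives $a=\sum_i\alpha_i=1\neq0$; since $p>3$ we have $\eta=-2\neq-1=-a^{-1}$; and $2+a\eta=2+(-2)=0$, which is condition (2). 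Finally $k=p\geq5\geq3$, so Theorem~\ref{22222}$(a)$ applies and shows $C_p(\alpha,v,\eta)$ with $\eta=-2$ is self-dual, as claimed.

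The only genuinely delicate step is the separability argument: one must reduce $m'(x)$ correctly modulo $p$, spot the factorisation $x^2(x^{p-2}+1)^2$, and carry out the exponent reduction $\beta^p=-\beta^2$ inside $m(\beta)$ so that it lands precisely on the quartic $Q$ appearing in the hypothesis. Everything after that is routine bookkeeping fed into Theorem~\ref{22222}$(a)$.
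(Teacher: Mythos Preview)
Your proof is correct and follows essentially the same approach as the paper: compute $m'(x)=x^{2p-2}+2x^p+x^2=(x^{p-1}+x)^2$, use the $\gcd$ hypothesis to rule out a common root with $m(x)$ via the substitution $\beta^{p-2}=-1$ that collapses $m(\beta)$ onto the quartic, then invoke Theorem~\ref{22222}(a) with $u_i=m'(\alpha_i)^{-1}=v_i^2$. You actually spell out more than the paper does (explicitly checking $a=1$, $\eta=-2\neq-a^{-1}$, and $2+a\eta=0$), which only makes the argument cleaner.
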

\begin{proof}
$m'(x)=x^{2p-2}+2x^p+x^2=(x^{p-1}+x)^2$. Suppose $(m(x),m'(x))\neq 1$, then there exists $\beta \neq 0$ such that  $m'(\beta)=0=m(\beta)$. That implies $\beta^{p-2}=-1$, then $m(\beta)=\beta^{2p}-\beta^{2p-1}+2\beta^{p+1}+3^{-1}\beta^3+1=
\beta^4-\beta^3-2\beta^3+3^{-1}\beta^3+1=0$, which is a contradiction with
$gcd(m'(x),x^4-(3-3^{-1})x^3+1)=1$. Thus, $(m(x),m'(x))= 1$, which implies $\alpha_1,\ldots, \alpha_{2p}$ are distinct elements in $\mathbb{F}_q$. Moreover, $$u_i=m'(\alpha_i)^{-1}=(\alpha_i^{p-1}+\alpha_i)^{-2}=v_i^2\neq 0$$ for $1 \leq i \leq 2p$. By Theorem \ref{22222},  $C_{p}(\alpha, v, \eta)$ is self-dual.
\end{proof}

\begin{cor}
In Theorem \ref{3}, if $\sum_{i\in I}\alpha_i \neq \frac{1}{2}$ for any
$I \subset \{1,2,\ldots,2p \}$ with $|I|=p$, then $C_{p}(\alpha, v, \eta)$ is a self-dual MDS  TGRS code of length $2p$ over $\mathbb{F}_{q}$ by Lemma \ref{1}. Otherwise,  $C_{p}(\alpha, v, \eta)$ is a  self-dual NMDS TGRS code of length $2p$ over $\mathbb{F}_{q}$.
\end{cor}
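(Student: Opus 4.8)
The plan is to compose the two results already in hand: Theorem \ref{3} furnishes self-duality, while Lemma \ref{1} supplies the MDS-versus-NMDS dichotomy; all that remains is to identify the relevant value of the twist and feed it into Lemma \ref{1}.

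First I would read off the twist parameter. The twisted polynomials in Theorem \ref{3} have the form $f(x)=\sum_{i=0}^{p-1}f_ix^i-2f_{p-1}x^p$; matching this against the general twisted polynomial $\sum_{i=0}^{k-1}a_ix^i+\eta a_{k-1}x^k$ with $k=p$ forces $\eta=-2$, whence $-\eta^{-1}=\tfrac12$. I would then note that the hypotheses of Lemma \ref{1} are met by the data of Theorem \ref{3}: the $\alpha_i$ are distinct (since $\gcd(m(x),m'(x))=1$), the $v_i=(\alpha_i^{p-1}+\alpha_i)^{-1}$ are nonzero (since $m'(\alpha_i)=(\alpha_i^{p-1}+\alpha_i)^2\neq 0$), and $\eta=-2\in\mathbb{F}_q^*$. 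Hence Lemma \ref{1} applies verbatim with $n=2p$, $k=p$.

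Finally I would invoke Lemma \ref{1}. With $S_p=\{\sum_{i\in I}\alpha_i:I\subset\{1,\dots,2p\},\,|I|=p\}$, part $(1)$ says $C_p(\alpha,v,\eta)$ is MDS iff $\tfrac12=-\eta^{-1}\notin S_p$, that is, iff $\sum_{i\in I}\alpha_i\neq\tfrac12$ for every $p$-subset $I$; part $(2)$ says it is NMDS iff $\tfrac12\in S_p$. Since Theorem \ref{3} gives self-duality unconditionally, combining it with the applicable case produces a self-dual MDS code when no $p$-subset sums to $\tfrac12$ and a self-dual NMDS code otherwise, which is exactly the assertion.

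The corollary carries no genuine obstacle: it is a bookkeeping composition of Theorem \ref{3} and Lemma \ref{1}. The single point demanding care is the correct identification $\eta=-2$, hence $-\eta^{-1}=\tfrac12$; a sign slip here would interchange the MDS and NMDS conditions and misstate the threshold value $\tfrac12$ appearing in the hypothesis.
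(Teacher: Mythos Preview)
Your proposal is correct and matches the paper's approach: the paper gives no separate proof for this corollary, treating it as immediate from Lemma \ref{1} (the reference is even embedded in the statement). Your write-up simply makes explicit the identification $\eta=-2$, $-\eta^{-1}=\tfrac12$ and the verification of Lemma \ref{1}'s hypotheses, which is exactly the intended (and only) argument.
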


\subsection{Self-dual codes over $\mathbb{F}_{q^2}$}

We will construct self-dual TGRS codes over $\mathbb{F}_{q^2}$, where $q$ is an odd prime power.
We need the following lemma first, which is very basic but important.

\begin{lem} Each element in $\mathbb{F}_q$ is a square element in $\mathbb{F}_{q^2}$, where $q$ is odd.
\end{lem}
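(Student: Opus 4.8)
The claim is that every element of $\mathbb{F}_q$ is a square in $\mathbb{F}_{q^2}$, with $q$ odd. The plan is to argue at the level of the multiplicative group. The nonzero elements $\mathbb{F}_q^*$ form a cyclic group of order $q-1$ inside the cyclic group $\mathbb{F}_{q^2}^*$ of order $q^2-1$. Since $q^2-1 = (q-1)(q+1)$ and $q+1$ is even (as $q$ is odd), the index $[\mathbb{F}_{q^2}^* : (\mathbb{F}_{q^2}^*)^2]$ is $2$, and the squares in $\mathbb{F}_{q^2}^*$ are exactly the subgroup of order $(q^2-1)/2$. So it suffices to check that every element of $\mathbb{F}_q^*$ lies in this subgroup, i.e. has order dividing $(q^2-1)/2$.

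Concretely, I would take $\beta \in \mathbb{F}_q^*$ and show $\beta = \gamma^2$ for some $\gamma \in \mathbb{F}_{q^2}^*$. The cleanest route: an element $\gamma$ of $\mathbb{F}_{q^2}^*$ is a square iff $\gamma^{(q^2-1)/2} = 1$. For $\beta \in \mathbb{F}_q^*$ we have $\beta^{q-1} = 1$, hence $\beta^{(q^2-1)/2} = \beta^{(q-1)(q+1)/2} = (\beta^{q-1})^{(q+1)/2} = 1$ because $(q+1)/2$ is an integer. Therefore $\beta$ is a square in $\mathbb{F}_{q^2}$. The case $\beta = 0 = 0^2$ is trivial. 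Alternatively one can phrase it via the norm map $N\colon \mathbb{F}_{q^2}^* \to \mathbb{F}_q^*$, $N(x) = x^{q+1}$, which is surjective, and note $x^{q+1} = (x^{(q+1)/2})^2$; this exhibits each element of $\mathbb{F}_q^*$ explicitly as a square.

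There is essentially no obstacle here — the only thing to be careful about is that $(q+1)/2$ is a genuine integer, which is exactly where the hypothesis that $q$ is odd enters; without it the statement fails (in characteristic $2$ squaring is a bijection on $\mathbb{F}_q$ but the claim that $\mathbb{F}_q \subseteq (\mathbb{F}_{q^2})^2$ is vacuous/true for a different reason, and the exponent argument breaks). I would present the two-line exponent computation as the proof and perhaps remark on the norm-map interpretation, since the surjectivity of the norm is the conceptual reason the statement holds.
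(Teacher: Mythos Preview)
Your argument is correct and is essentially the paper's proof: the paper fixes a generator $\alpha$ of $\mathbb{F}_{q^2}^*$, notes $\mathbb{F}_q^* = \langle \alpha^{q+1}\rangle$, and observes $2\mid q+1$, which is exactly your norm-map remark $x^{q+1}=(x^{(q+1)/2})^2$ in disguise. Your Euler-criterion computation $\beta^{(q^2-1)/2}=(\beta^{q-1})^{(q+1)/2}=1$ is the same parity fact read from the other side.
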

\begin{proof}
Let $\mathbb{F}_{q^2}^*=\langle \alpha\rangle$. Then  $\mathbb{F}_{q}^*=\langle \alpha^{q+1}\rangle$, $2|q+1$. It is right.
\end{proof}

Next we turn to construct self-dual MDS or self-dual NMDS codes from TGRS codes. We obtain the following theorem.

\begin{thm}\label{5} Assume that  $n $ is an even integer with $\gcd(q, n)=1$,
  $m(x)=x^{n}+bx^{n-1}+1\in \Bbb F_{q'}[x]$, $b\ne 0$,    and $\mathbb{F}_q$  the splitting field of $m(x)$ over $\Bbb F_{q'}$.
    Let $\alpha_i$ be the roots of $m(x)$ which satisfying $\alpha_i\neq b(1-n)n^{-1}$ and $\alpha_i^n \neq n-1$, then $\alpha_1,\ldots, \alpha_n$ are pairwise distinct and $v_i\neq 0$ for $1 \leq i \leq n$.
     Write $\alpha=(\alpha_1,\ldots,\alpha_n)$, $v=(v_1,\ldots,v_n)$ and $\eta=2 b^{-1}$, where $$\{v_i \in \mathbb{F}_{q^2} \mid v_i^2=\frac{\alpha_i}{\alpha_i^n-n+1}, 1 \leq i \leq n\}.$$
    Then
$$C_{\frac{n}{2}}(\alpha, v, \eta)=\{(v_1f(\alpha_1),\ldots,v_nf(\alpha_n))\mid f(x)=\sum_{i=0}^{\frac{n}{2}-1}f_ix^i+\eta f_{\frac{n}{2}-1}x^{\frac{n}{2}}\in \mathbb{F}_{q^2}[x] \}$$
is a self-dual TGRS code of length $n$ over $\mathbb{F}_{q^2}$.

\end{thm}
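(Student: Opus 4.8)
The plan is to verify the two conditions of Theorem \ref{22222}$(a)$ for the given data, exactly as in the proofs of Theorems \ref{2} and \ref{3}. First I would record the basic setup: here the code has length $n=2\cdot\frac n2$, dimension $k=\frac n2$, twist parameter $\eta=2b^{-1}$, and $a=\sum_{i=1}^n\alpha_i$. Since $m(x)=x^n+bx^{n-1}+1$ and the $\alpha_i$ are its roots, by Vieta's formulas $a=-b$, so $a\ne 0$ (as $b\ne 0$) and $1+a\eta=1+(-b)(2b^{-1})=-1\ne 0$; hence we are genuinely in case $(a)$, and moreover $2+a\eta=2-2=0$, which is precisely condition $(2)$ of Theorem \ref{22222}$(a)$.

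Next I would check that $\alpha_1,\dots,\alpha_n$ are pairwise distinct and the $v_i$ are well-defined and nonzero. Compute $m'(x)=nx^{n-1}+(n-1)bx^{n-2}=x^{n-2}\bigl(nx+(n-1)b\bigr)$. A common root $\beta$ of $m$ and $m'$ would satisfy either $\beta=0$ (impossible, since $m(0)=1$) or $\beta=b(1-n)n^{-1}$; the hypothesis $\alpha_i\ne b(1-n)n^{-1}$ rules the latter out, so $\gcd(m,m')=1$ and, since $\gcd(q,n)=1$ guarantees $m'\ne 0$, the roots are distinct. Then the standard formula gives
\begin{equation*}
u_i=\prod_{j\ne i}(\alpha_i-\alpha_j)^{-1}=m'(\alpha_i)^{-1}=\bigl(\alpha_i^{n-2}(n\alpha_i+(n-1)b)\bigr)^{-1}.
\end{equation*}
Using $\alpha_i^n=-b\alpha_i^{n-1}-1$ one simplifies $n\alpha_i+(n-1)b$: multiply by $\alpha_i^{n-1}$ and substitute to get $\alpha_i^{n-2}(n\alpha_i+(n-1)b)=n\alpha_i^{n-1}+(n-1)b\alpha_i^{n-2}$, and a short manipulation with the relation $\alpha_i^n+b\alpha_i^{n-1}+1=0$ should rewrite $u_i^{-1}$ in the form $\alpha_i^{-1}(\alpha_i^n-n+1)$ (this is the computation the statement is clearly engineered to produce). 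The condition $\alpha_i^n\ne n-1$ then guarantees $u_i\ne 0$, so the $v_i$ with $v_i^2=\frac{\alpha_i}{\alpha_i^n-n+1}=u_i^{-1}\cdot(\text{something})$ — more precisely $v_i^2$ equal to a scalar multiple of $u_i$ — are nonzero; and such $v_i\in\mathbb F_{q^2}$ exist by Lemma (each element of $\mathbb F_q$ is a square in $\mathbb F_{q^2}$), since $\frac{\alpha_i}{\alpha_i^n-n+1}$ lies in $\mathbb F_q$.

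With these in hand, condition $(1)$ of Theorem \ref{22222}$(a)$ — existence of $\lambda\in\mathbb F_{q^2}^*$ with $v_i^2=\lambda u_i$ for all $i$ — holds with the appropriate constant $\lambda$ coming out of the simplification of $u_i$ above, and condition $(2)$ holds as noted. Theorem \ref{22222}$(a)$ then immediately yields that $C_{n/2}(\alpha,v,\eta)$ is self-dual, completing the proof. The only slightly delicate point is the algebraic simplification of $m'(\alpha_i)^{-1}$ into the shape $\frac{\alpha_i}{\alpha_i^n-n+1}$ (up to the fixed scalar $\lambda$): it is a routine but careful manipulation of the defining relation $\alpha_i^n=-b\alpha_i^{n-1}-1$, and one must make sure the resulting $\lambda$ is a single nonzero constant independent of $i$ so that both the well-definedness of $v_i$ over $\mathbb F_{q^2}$ and condition $(1)$ go through simultaneously.
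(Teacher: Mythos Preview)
Your approach is correct and matches the paper's proof essentially step for step: verify the roots are distinct via $\gcd(m,m')=1$, compute $u_i=m'(\alpha_i)^{-1}$, simplify using $b\alpha_i^{n-1}=-\alpha_i^n-1$, and invoke Theorem~\ref{22222}(a). The only point you left vague is in fact immediate---multiplying $m'(\alpha_i)=n\alpha_i^{n-1}+(n-1)b\alpha_i^{n-2}$ by $\alpha_i$ and substituting gives $\alpha_i m'(\alpha_i)=n\alpha_i^n-(n-1)(\alpha_i^n+1)=\alpha_i^n-(n-1)$, so $u_i=\dfrac{\alpha_i}{\alpha_i^n-n+1}=v_i^2$ exactly, i.e.\ $\lambda=1$, and no separate constant needs tracking.
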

\begin{proof}
There is an irreducible factorization of $m(x)$ over $\mathbb{F}_q$, i.e.  $$m(x)=x^{n}+bx^{n-1}+1=\prod_{i=1}^n(x-\alpha_i)\in \mathbb{F}_q[x].$$
It is easy to obtain that $\frac{b(1-n)}{n}$ and $0$ are roots of $m'(x)=nx^{n-1}+b(n-1)x^{n-2}$. Since $\alpha_i\neq b(1-n)n^{-1}$ and $m(0)\neq 0$
we then have $gcd(m(x),m'(x))=1$. Consequently, $m(x)=0$ has no repeated roots, i.e. $\alpha_i(1 \leq i \leq n)$ are distinct. Since $\alpha_i^n \neq n-1$ then $v_i\neq 0$.
Thus, $C_{\frac{n}{2}}(\alpha, v, \eta)$ is a TGRS code over $\mathbb{F}_{q^2}$. By $$u_i=m'(\alpha_i)^{-1}=\frac{\alpha_i}{n\alpha_i^n+(n-1)b \alpha_i^{n-1}}=\frac{\alpha_i}{n\alpha_i^n-(n-1)(\alpha_i^n+1)}=\frac{\alpha_i}{\alpha_i^n-n+1}
=v_i^2$$ for $1 \leq i \leq n$ and $2-b\eta=0$, according to Theorem \ref{22222}, $C_{\frac{n}{2}}(\alpha, v, \eta)$ is self-dual.
\end{proof}

\begin{cor}
In Theorem \ref{5}, if $\sum_{i\in I}\alpha_i \neq -\frac{b}{2}$ for any
$I \subset \{1,2,\ldots,n \}$ with $|I| =\frac{n}{2}$, then $C_{\frac{n}{2}}(\alpha, v, \eta)$ is a self-dual MDS TGRS code of length $n$ over $\mathbb{F}_{q^2}$ by Lemma \ref{1}. Otherwise,  $C_{\frac{n}{2}}(\alpha, v, \eta)$ is a  self-dual NMDS TGRS code of length $n$ over $\mathbb{F}_{q^2}$.
\end{cor}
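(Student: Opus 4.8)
The plan is to read off the corollary from Theorem \ref{5} together with Lemma \ref{1}, with essentially no extra work. Theorem \ref{5} has already established the substantive part: under its hypotheses $C_{\frac{n}{2}}(\alpha, v, \eta)$ is a bona fide TGRS code of length $n$ over $\mathbb{F}_{q^2}$ and it is self-dual, so in particular it has dimension $\frac n2$ and parameters $[\,n,\frac n2,d\,]$ for some $d$. What remains is only to identify $d$, i.e.\ to decide whether the code is MDS or NMDS, and that is precisely the content of Lemma \ref{1}.

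The first step is to specialise Lemma \ref{1} to $k=\frac n2$ and $\eta = 2b^{-1}$. Since $-\eta^{-1}=-(2b^{-1})^{-1}=-\frac b2$, and the set $S_k$ of Lemma \ref{1} is here $S_{\frac n2}=\{\sum_{i\in I}\alpha_i : I\subset\{1,\dots,n\},\ |I|=\frac n2\}$, Lemma \ref{1} says: $C_{\frac n2}(\alpha,v,\eta)$ is MDS if and only if $-\frac b2\notin S_{\frac n2}$, and it is NMDS if and only if $-\frac b2\in S_{\frac n2}$.

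Next I would split into the two cases. If $\sum_{i\in I}\alpha_i\neq-\frac b2$ for every $I\subset\{1,\dots,n\}$ with $|I|=\frac n2$ — that is, $-\frac b2\notin S_{\frac n2}$ — then Lemma \ref{1}$(1)$ makes $C_{\frac n2}(\alpha,v,\eta)$ an MDS code; combined with its self-duality from Theorem \ref{5}, it is a self-dual MDS TGRS code of length $n$ over $\mathbb{F}_{q^2}$ (with parameters $[\,n,\frac n2,\frac n2+1\,]$). In the opposite case $-\frac b2\in S_{\frac n2}$, Lemma \ref{1}$(2)$ makes it an NMDS code, hence a self-dual NMDS TGRS code of length $n$ over $\mathbb{F}_{q^2}$.

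I do not expect any genuine obstacle: the whole argument is the chaining of two results proved earlier in the paper. The only place to be slightly careful is the bookkeeping — checking that the quantity $-\eta^{-1}$ controlling the dichotomy in Lemma \ref{1} really is $-\frac b2$ for the choice $\eta=2b^{-1}$, and that the subset-sum set there is exactly the $S_{n/2}$ referenced implicitly in the corollary's hypothesis. As a side remark, in the second case one could also argue that a self-dual AMDS code is automatically NMDS since its dual equals itself, but that is not needed, because Lemma \ref{1}$(2)$ already gives the NMDS conclusion outright.
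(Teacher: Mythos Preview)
Your proposal is correct and follows exactly the approach intended by the paper: the corollary is stated without a separate proof, relying directly on Theorem \ref{5} for self-duality and on Lemma \ref{1} with $-\eta^{-1}=-\tfrac{b}{2}$ for the MDS/NMDS dichotomy. Your bookkeeping is accurate and nothing further is needed.
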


\begin{thm} \label{4} Let $\beta\in \mathbb{F}_q^*$, $\lambda=\beta^{n+1}$ and $n+1 | q-1$ with $n$ an even integer.
 Assume that $m(x)=x^n+x^{n-1}\beta+\ldots+x\beta^{n-1}+\beta^n$ and $\alpha_i, 1 \leq i \leq n$ the roots of $m(x)$.
Let $v_i\in \mathbb{F}_{q^2}$ such that $v_i^2=\alpha_i(\alpha_i-\beta)$ for $1 \leq i \leq n$ and $\eta=2 \beta^{-1}$. Write $\alpha=(\alpha_1,\ldots,\alpha_n)$, $v=(v_1,\ldots,v_n)$, then $$C_{\frac{n}{2}}(\alpha, v, \eta)=\{(v_1f(\alpha_1),\ldots,v_nf(\alpha_n))\mid f(x)=\sum_{i=0}^{\frac{n}{2}-1}f_ix^i+\eta f_{\frac{n}{2}-1}x^{\frac{n}{2}}\in \mathbb{F}_{q^2}[x] \}$$
is a self-dual TGRS code of length $n$ over $\mathbb{F}_{q^2}$.
\end{thm}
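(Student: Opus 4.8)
The plan is to verify the hypotheses of Theorem \ref{22222}$(a)$ for the given data $(\alpha, v, \eta)$, since that theorem reduces self-duality to two checkable conditions: that $v_i^2 = \lambda u_i$ for some common $\lambda \in \mathbb{F}_q^*$, and that $2 + a\eta = 0$ where $a = \sum_{i=1}^n \alpha_i$. Here $m(x) = x^n + x^{n-1}\beta + \cdots + x\beta^{n-1} + \beta^n = \frac{x^{n+1} - \beta^{n+1}}{x - \beta}$, so the roots $\alpha_1, \ldots, \alpha_n$ are exactly the $(n+1)$-st roots of $\lambda = \beta^{n+1}$ other than $\beta$ itself. Since $n+1 \mid q-1$, these lie in $\mathbb{F}_q$, they are pairwise distinct (the polynomial $x^{n+1} - \lambda$ is separable as $\gcd(q, n+1) = 1$, which follows from $n+1 \mid q-1$), and $v_i^2 = \alpha_i(\alpha_i - \beta) \ne 0$ since no $\alpha_i$ equals $0$ or $\beta$; by the preceding lemma each $\alpha_i(\alpha_i - \beta) \in \mathbb{F}_q$ is a square in $\mathbb{F}_{q^2}$, so the $v_i \in \mathbb{F}_{q^2}$ exist and are nonzero. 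Thus $C_{n/2}(\alpha, v, \eta)$ is a well-defined TGRS code over $\mathbb{F}_{q^2}$.

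Next I would compute $u_i = m'(\alpha_i)^{-1}$. Differentiating $m(x)(x-\beta) = x^{n+1} - \beta^{n+1}$ gives $m'(x)(x - \beta) + m(x) = (n+1)x^n$, and evaluating at $x = \alpha_i$ (where $m(\alpha_i) = 0$) yields $m'(\alpha_i) = \frac{(n+1)\alpha_i^n}{\alpha_i - \beta}$. Since $\alpha_i^{n+1} = \lambda = \beta^{n+1}$, we have $\alpha_i^n = \beta^{n+1}/\alpha_i$, so $m'(\alpha_i) = \frac{(n+1)\beta^{n+1}}{\alpha_i(\alpha_i - \beta)}$, hence $u_i = \frac{\alpha_i(\alpha_i - \beta)}{(n+1)\beta^{n+1}} = \frac{v_i^2}{(n+1)\beta^{n+1}}$. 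Therefore condition (1) holds with $\lambda = (n+1)\beta^{n+1} \in \mathbb{F}_q^*$ (nonzero because $n+1 \mid q-1$ forces $n+1 \not\equiv 0 \pmod p$ and $\beta \ne 0$).

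For condition (2), I need $a = \sum_{i=1}^n \alpha_i$. Since $\{\beta, \alpha_1, \ldots, \alpha_n\}$ is the full set of roots of $x^{n+1} - \lambda$, whose $x^n$-coefficient is $0$, the sum of all $n+1$ roots is $0$, so $a = -\beta$. Then $2 + a\eta = 2 + (-\beta)(2\beta^{-1}) = 0$, and also $a = -\beta \ne 0$ and $\eta = 2\beta^{-1} \ne -a^{-1} = \beta^{-1}$ (since $2 \ne 1$ in odd characteristic), so we are genuinely in case $(a)$ of Theorem \ref{22222}. Both conditions are satisfied, so $C_{n/2}(\alpha, v, \eta)$ is self-dual. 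The main thing to be careful about is confirming that $x^{n+1} - \lambda$ is separable and splits over $\mathbb{F}_q$ — both follow cleanly from $n+1 \mid q-1$ — and checking that $(n+1)\beta^{n+1}$ is a nonzero element of $\mathbb{F}_q$; the rest is the routine derivative computation above.
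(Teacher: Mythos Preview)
Your proof is correct and follows essentially the same route as the paper: factor $m(x)=(x^{n+1}-\lambda)/(x-\beta)$, compute $m'(\alpha_i)=(n+1)\lambda/(\alpha_i(\alpha_i-\beta))$ so that $u_i=(n+1)^{-1}\lambda^{-1}v_i^2$, and then invoke Theorem~\ref{22222}$(a)$. In fact you are more thorough than the paper's own proof, which omits the explicit verification of $a=-\beta$, of $2+a\eta=0$, and of the side conditions $a\neq 0$, $\eta\neq -a^{-1}$ that place us in case~$(a)$.
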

\begin{proof}
Let $ \gamma \in \mathbb{F}_q^*$ such that $ord(\gamma)=n+1$, then there is a irreducible factorization of $m(x)$ over $\mathbb{F}_q$, i.e.  $$m(x)=\frac{x^{n+1}-\lambda}{x-\beta}=
\prod_{i=0}^n(x-\beta\gamma^i).$$ Without loss of generality, write $\alpha_i=\beta \gamma^i$, then  $m'(\alpha_i)=\frac{(n+1)\lambda }{\alpha_i(\alpha_i-\beta)}$ for $1\leq i \leq n$.
Then we have $$u_i=m'(\alpha_i)^{-1}=(n+1)^{-1}\lambda^{-1} v_i^2.$$
By Lemma \ref{3},
 then
$$C_{\frac{n}{2}}(\alpha, v, \eta)=\{(v_1f(\alpha_1),\ldots,v_nf(\alpha_n))\mid f(x)=\sum_{i=1}^{\frac{n}{2}-1}f_ix^i+\eta f_{\frac{n}{2}-1}x^{\frac{n}{2}}\in \mathbb{F}_{q^2}[x] \}$$
is a self-dual TGRS code of length $n$ over $\mathbb{F}_{q^2}$.
\end{proof}

\begin{rem}
In Theorem \ref{4}, let $m(x)=x^n+x^{n-1}(\beta\gamma^j)+\ldots+x(\beta\gamma^j)^{n-1}+(\beta\gamma^j)^n
=\frac{x^{n+1}-\lambda}{x-\beta\gamma^j}$ with $ord(\gamma)=n+1, 0 \leq j \leq n.$ And $\alpha_i,~1 \leq i \leq n$ are the roots of $m(x)$.
Let $v_i\in \mathbb{F}_{q^2}$ such that  $v_i^2=\alpha_i(\alpha_i-\beta \gamma^j)$ for $1 \leq i \leq n$ and $\eta=2 (\beta\gamma^j)^{-1}$. Then self-dual TGRS codes of length $n$ over $\mathbb{F}_{q^2}$ could be constructed as well.
\end{rem}

\begin{cor}
In Theorem \ref{4}, if $\sum_{i\in I}\alpha_i \neq -\frac{\beta}{2}$ for any
$I \subset \{1,2,\ldots,n \}$ with $  |I| =\frac{n}{2}$, then $C_{\frac{n}{2}}(\alpha, v, \eta)$ is a self-dual MDS TGRS code of length $n$ over $\mathbb{F}_{q^2}$. Otherwise,  $C_{\frac{n}{2}}(\alpha, v, \eta)$ is a self-dual NMDS TGRS code of length $n$ over $\mathbb{F}_{q^2}$.
\end{cor}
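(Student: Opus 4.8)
The plan is to read the result straight off Theorem~\ref{4} and Lemma~\ref{1}. By Theorem~\ref{4}, under the hypotheses in force $C_{n/2}(\alpha, v, \eta)$ is a self-dual TGRS code of length $n$ over $\mathbb{F}_{q^2}$; in particular the $\alpha_i$ are pairwise distinct and the $v_i$ are nonzero, so it is a genuine TGRS code over $\mathbb{F}_{q^2}$ to which Lemma~\ref{1} applies (with base field $\mathbb{F}_{q^2}$ and $k = n/2$). It therefore remains only to decide when such a code is MDS and when it is NMDS.

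First I would rewrite the twist parameter: since $\eta = 2\beta^{-1}$ and $q$ is odd (so $2$, hence $\eta$, is invertible), $-\eta^{-1} = -\frac{\beta}{2}$. Setting $S_{n/2} = \{\sum_{i\in I}\alpha_i : I\subseteq\{1,\ldots,n\},\ |I| = n/2\}$ as in Lemma~\ref{1}, part~(1) of that lemma says $C_{n/2}(\alpha, v, \eta)$ is MDS iff $-\eta^{-1}\notin S_{n/2}$, and part~(2) says it is NMDS iff $-\eta^{-1}\in S_{n/2}$. The displayed hypothesis ``$\sum_{i\in I}\alpha_i \neq -\frac{\beta}{2}$ for every $I$ with $|I| = n/2$'' is precisely the assertion $-\eta^{-1}\notin S_{n/2}$; combining it with the self-duality supplied by Theorem~\ref{4} yields a self-dual MDS code. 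If instead the hypothesis fails, then $-\eta^{-1} = -\frac{\beta}{2}\in S_{n/2}$, so Lemma~\ref{1}(2) makes $C_{n/2}(\alpha, v, \eta)$ NMDS, and Theorem~\ref{4} again upgrades this to a self-dual NMDS code. Since the conditions $-\eta^{-1}\notin S_{n/2}$ and $-\eta^{-1}\in S_{n/2}$ are complementary, these two cases are exhaustive.

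There is no substantive obstacle here: the entire content is the one-line translation of the numerical hypothesis into the membership condition of Lemma~\ref{1} via the identity $-\eta^{-1} = -\frac{\beta}{2}$, together with the observation that Theorem~\ref{4}'s conclusions (self-duality, distinctness of the $\alpha_i$, nonvanishing of the $v_i$) are exactly what is needed to invoke Lemma~\ref{1} legitimately. The only mild care needed is to make sure one works over $\mathbb{F}_{q^2}$ throughout and that $q$ odd --- already implicit in Theorem~\ref{4}, which uses Lemma~\ref{3} --- makes $\eta^{-1}$ meaningful.
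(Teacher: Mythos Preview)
Your proposal is correct and matches the paper's intended argument: the paper gives no explicit proof for this corollary, but the parallel corollaries after Theorems~\ref{2}, \ref{3}, and \ref{5} each read off the MDS/NMDS dichotomy directly from Lemma~\ref{1}, and your argument does exactly that via the identity $-\eta^{-1}=-\beta/2$. The only cosmetic point is that your aside about ``Lemma~\ref{3}'' is picking up a typo in the paper (Theorem~\ref{4} should cite Theorem~\ref{22222}, not Lemma~\ref{3}); the relevant fact that $q$ is odd is stated at the start of \S3.2.
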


\subsection{Some examples}

Next we will give examples of self-dual MDS TGRS codes and self-dual NMDS TGRS codes according to  Theorem
\ref{4}.

\begin{exa}
 $\mathbb{F}_{89}^*=\langle 3\rangle$.

 $(1)$
let $\alpha_i \equiv (3^8)^i\equiv 64^i \pmod {89}$ for $1 \leq i \leq 10$, we then have
$$\frac{x^{11}-1}{x-1}=\prod_{i=1}^{10}(x-\alpha_i).$$
  Let $u_i=\prod_{j=1,j \neq i}^{4}(\alpha_i-\alpha_j)^{-1}$, then there exists $v_i\in \mathbb{F}_{61^2}$ such that $u_i=v_i^2$ for $1 \leq i \leq 10$.

  $(2)$ Write $a=\sum_{i=1}^{10}\alpha_i$. Let $\eta=2$, then $2+a \eta=0$.

  $(3)$ Note that $\sum_{i\in I_5}\alpha_i \neq  \frac{-1}{\eta}=44$  for any $I_5 \subset \{1,2,\ldots,10\}$ with $\mid I_5\mid=5$.
  Thus,
let $\alpha=(\alpha_1,\ldots,\alpha_{10})$, $v=(v_1,\ldots,v_{10})$, then
$$C_{5}(\alpha, v, 2)=\{(v_1f(\alpha_1),\ldots,v_{10}f(\alpha_{10}))\mid f(x)=\sum_{i=0}^{4}f_ix^i+2 f_{4}x^5\in \mathbb{F}_{89^2}[x] \}$$
is a  self-dual MDS TGRS code of length $10$ over $\mathbb{F}_{89^2}$.
\end{exa}

\begin{exa}
 $\mathbb{F}_{61}^*=\langle 2\rangle$.

 $(1)$
let $\alpha_i \equiv (2^4)^i \equiv 16^i \pmod {61}$ for $1 \leq i \leq 14$,
we then have $$\frac{x^{15}-1}{x-1}=\prod_{i=1}^{14}(x-\alpha_i).$$    Let $u_i=\prod_{j=1,j \neq i}^{4}(\alpha_i-\alpha_j)^{-1}$, then there exists $v_i\in \mathbb{F}_{61^2}$ such that $u_i=v_i^2$ for $1 \leq i \leq 14$.

  $(2)$ Write $a=\sum_{i=1}^{14}\alpha_i$, let $\eta=2$, then $2+a \eta=0$.

  $(3)$ Note that $\sum_{i\in I_7}\alpha_i= \frac{-1}{\eta}=30$ if  $I_7 =\{ 2,5,9,11,12,13,14 \}$ with $\mid I_7\mid=7$.
  Thus,
let $\alpha=(\alpha_1,\ldots,\alpha_{14})$, $v=(v_1,\ldots,v_{14})$, then
$$C_{7}(\alpha, v, 2)=\{(v_1f(\alpha_1),\ldots,v_{14}f(\alpha_{14}))\mid f(x)=\sum_{i=0}^{6}f_ix^i+2 f_{6}x^7\in \mathbb{F}_{61^2}[x] \}$$
is a  self-dual  NMDS TGRS code of length $14$ over $\mathbb{F}_{61^2}$.
\end{exa}

\section{Conclusion}
In this paper, we investigate self-dual MDS and self-dual NMDS codes by TGRS codes. We give the check matrices of TGRS codes, which play an important role in investigating dual codes of TGRS codes. And we give the efficient and necessary condition of self-dual TGRS codes.  By factorization of several polynomials over finite field, we decide  $\alpha$, $v$ and $\eta$ such that $C_{\frac{n}{2}}(\alpha,v,\eta)$ are self-dual. Consequently, we obtain several classes of self-dual MDS codes. It is possible to construct more classes self-dual MDS or self-dual NMDS codes by different polynomials from TGRS codes.


\begin{thebibliography}{99}







\bibitem{5} T. Aaron Gulliver, J-L, Kim, and Y. Lee, ``New MDS or near-MDS self-dual codes," IEEE Trans. Inf. Theory, vol. 54, no. 9, pp. 4354-4360, Sep. 2008.

\bibitem{6} T. Baicheva, I. Bouyukliev, S. Dodunkov, and W. Willems, ``On the [10,5,6] 9 Reed-Solomon and Glynn codes," in Mathematica Balkanica, New Series, 2004, pp. 67-78.

\bibitem{7} K. Betsumiya, S. Georgiou, T. A. Gulliver, M. Harada, and C. Koukouvinos, ``On self-dual codes over some prime fields," Discrete Math., vol. 262, nos. 1-3, pp. 37-58, 2003.

\bibitem{9} S. Georgiou and C. Koukouvious, ``MDS self-dual codes over large prime fields", Finite Fields Their Appl., vol. 8, no. 4, pp. 455-470, Oct. 2002.

 \bibitem{10} M. Grassl and T. Aaron Gulliver, ``On self-duls MDS codes," in Proc. ISIT, JUl. 2008, pp. 1954-1957.

  \bibitem{11} K. Guenda, ``New MDS self-dual codes over finite fields," Designs Codes Cryptogr., vol. 62, no. 1, pp. 31-42, Jan. 2012.

\bibitem{12} M. Harada and H. Kharaghani, ``Orthogonal designs and MDS self-dual codes," Austral. J. Combin., vol. 35, pp. 57-67, Jan. 2006.

\bibitem{13} J-L. Kim and Y. Lee, ``MDS self-dual codes," in Proc. ISIT, Jun. 2004, pp. 1872-1877.

 \bibitem{14} J-L. Kim and Y. Lee, ``Euclidean and Hermitian self-dual MDS codes over large finite fields," J. Combinat. Theory, A, vol. 105, no. 1, pp. 79-95, Jan. 2004.

\bibitem{15} J.I. Kokkala, D. S. Krotov, and P.R.J.\"{o}sterg{\aa}rd, ``On the classification of MDS codes," IEEE Trans. Inf. Theory, vol. 61, no. 12, pp. 6485-6492, Dec. 2015.

\bibitem{20} J. P. Pedersen and C. Dahl, ``Classification of pseudo-cyclic MDS cods," IEEE Trans. Inf. Theory, vol. 37, no. 2, pp. 365-370, Mar. 1991.

\bibitem{21} R. M. Roth and A. Lemple, ``A construction of non-Reed-Solomon type MDS codes," IEEE Trans. Inf. Theory, vol. 35, no. 3, pp. 655-657, May 1989.

\bibitem{77} S. H. Dau, W. Song, and C. Yuen, ``On the existence of MDS codes over small fields with constrained generator matrices," in Proc. ISIT. Jun. 2014, pp. 1787-1791.

\bibitem{18} F. J. MacWillians and N. J. A. Sloane, The Theory of Error-Corrcting Codes. Amsterdam, The Netherlands: North Holland, 1977.



\bibitem{8} W. Huffman, V. Pless, ``Fundamentals of Error Correcting Codes,"  Cambridge University Press, 2003.



\bibitem{55} R. Cramer et al., ``On codes, matroids, and secure multiparty computation from linear secret-sharing schemes," IEEE Trans. Inf. Theory, vol. 54, no. 6, pp. 2647-2657, Jun. 2008.

\bibitem{88} S. T. Dougherty, S. Mesnager, and P. Sole, ``Secret-sharing schemes based on self-dual codes," in Proc. Inf. Theory Workshop, May 2008, pp. 338-342.

\bibitem{19} J. Massey, ``Some applications of coding theory in cryptography," in Proc. 4th IMA Conf. Cryptogr. Coding, 1995, pp. 33-47.

\bibitem{3} L. Jin, and C. Xing, ``New MDS self-dual codes from generalized Reed-Solomon codes," IEEE Trans. Inf. Theory, vol. 63, no. 3, pp. 1434-1438, Mar. 2017.


\bibitem{27} H. Yan, ``A note on the constructions of MDS self-dual codes," Cryptogr. Commun., vol. 11, pp. 259-268, 2019.

 \bibitem{28} K. lebed, H. Liu, J. Luo, ``Construction of MDS self-dual codes over finite fields," Finite Fields Their Appl., vol. 59, pp. 199-207, 2019.

 \bibitem{29} X. Fang, K. lebed, H. liu, J. luo, ``New MDS self-dual codes over finite fields of odd characteristic," Des. Cods Cryptogr., vol. 88, no. 6, pp. 1127-1138, 2020.

\bibitem{ZZZ} A. Zhang, K. Feng, ``A unified approach to construct MDS self-dual codes via Reed-Solomon codes," IEEE Trans. Inf. Theory, vol. 66, no. 6, pp. 3650-3656, 2020.


\bibitem{555} S. Dodunkov, I. Landjev, ``On near-MDS codes," J. Geometry, vol. 54, nos. 1-2, pp. 30-43, 1994.


 \bibitem{24} D. E. Simos and Z. Varbanov., ``MDS codes, NMDS Codes and Their Secret-Sharing Schems". Accessed: Apr. 2018. [Online].



\bibitem{23} L. Jin, H. Kan, ``Self-dual near MDS codes from elliptic curves," IEEE Trans. Inf. Theory, vol. 65, no. 4, pp. 2166-2170, Apr. 2019.


\bibitem{26} K. Lias S., K. Christos, S. Dimitris E., ``MDS and near-MDS self-dual codes over large prime filds," Adv. Math. Commun., vol. 3, no. 4, pp. 349-361, 2009.

  \bibitem{1} P. Beelen, S. Puchinger, and J. Ronsenkilde ne Nielsen, ``Twisted Reed-Solomon Codes," in IEEE ISIT, 2017, PP. 336-340.


\end{thebibliography}
\end{document}